\documentclass[%
reprint,
superscriptaddress,
nofootinbib,
amsmath,amssymb,
aps,longbibliography,
pra,
]{revtex4-2}

\newif\ifarxiv 
\arxivtrue

\usepackage{amsthm,bm,xparse,xcolor,tikz,physics}
\usepgfmodule{decorations}
\usetikzlibrary{decorations,decorations.pathmorphing,patterns}
\usepackage[T1]{fontenc}

\AtBeginDocument{%
    \newwrite\bibnotes
    \def\bibnotesext{Notes.bib}
    \immediate\openout\bibnotes=\jobname\bibnotesext
    \immediate\write\bibnotes{@CONTROL{REVTEX42Control}}
    \immediate\write\bibnotes{@CONTROL{%
    apsrev42Control,author="08",editor="1",pages="1",title="0",year="0"}}
     \if@filesw
     \immediate\write\@auxout{\string\citation{apsrev42Control}}%
    \fi
  }%

\usepackage[nodayofweek]{datetime}  
\usepackage{hyperref}
\usepackage{xcolor}
\usepackage{makecell}
\usepackage[capitalise]{cleveref}
\usepackage{enumerate}   
\definecolor{mylinkcolor}{rgb}{0,0,0.8} 
\hypersetup{unicode=true, %
  bookmarksnumbered=false,bookmarksopen=false,bookmarksopenlevel=1, %
  breaklinks=true,pdfborder={0 0 0},colorlinks=true}%
\hypersetup{%
  anchorcolor=mylinkcolor,citecolor=mylinkcolor, %
  filecolor=mylinkcolor,linkcolor=mylinkcolor, %
  menucolor=mylinkcolor,runcolor=mylinkcolor, %
  urlcolor=mylinkcolor}%

\newtheorem{theorem}{Theorem}
\newtheorem{lemma}{Lemma}
\newtheorem{corollary}{Corollary}

\theoremstyle{definition}

\newtheorem{remark}{Remark}

\usepackage{soul}
\newcommand{\id}{\openone}

\newcommand{\phv}{P_{H/V}}
\newcommand{\pad}{P_{D/A}}
\newcommand{\pdc}{P_{DC}}
\newcommand{\pdcn}{P_{DC}^N}
\newcommand{\pin}{\tilde{\Pi}^N}
\newcommand{\pik}{\tilde{\Pi}^k}
\newcommand{\HN}{\mathcal{H}_N}

\newcommand{\I}{\mathrm{i}}

\begin{document}
\title{Numerical security framework for quantum key distribution with bypass channels}
\author{Lewis Wooltorton}
    \email{lewis.wooltorton@ens-lyon.fr}
    \thanks{Current affiliation: Inria, ENS de Lyon.}
    \affiliation{Department of Mathematics, University of York, Heslington, York, YO10 5DD, United Kingdom}
    \affiliation{Quantum Engineering Centre for Doctoral Training, H. H. Wills Physics Laboratory and Department of Electrical \& Electronic Engineering, University of Bristol, Bristol BS8 1FD, United Kingdom}
    \affiliation{Inria, ENS de Lyon, LIP, 46 Allee d’Italie, 69364 Lyon Cedex 07, France}

\author{Twesh Upadhyaya}
    \affiliation{Joint Center for Quantum Information and Computer Science, University of Maryland and NIST, College Park, MD 20742, USA}
    \affiliation{Department of Physics, University of Maryland, College Park, MD 20742, USA}

\author{Mohsen Razavi}
    \email{m.razavi@leeds.ac.uk}
    \affiliation{School of Electronic and Electrical Engineering, University of Leeds, Leeds LS2 9JT, United Kingdom}

\date{\today}

\begin{abstract}
        
        Satellite based quantum key distribution (QKD) aims to establish secure key exchange over long distances despite significant technological challenges. To alleviate some of these challenges, Ghalaii et al. [PRX Quantum 4, 040320 (2023)] proposed that any airborne eavesdropper up to a certain size can be detected by classical monitoring techniques, limiting the transmission efficiencies of any undetected Eve. This creates a new QKD scenario in which some of the transmitted signal from Alice to Bob bypasses Eve entirely. In this manuscript, we develop a general framework for computing key rates in this ``bypass'' scenario for discrete variable protocols. We first numerically support a conjecture that the performance of BB84 with single photons does not improve under bypass constraints, and go on to find new regimes that do. Specifically, we find improvements when the receiver's detectors have an efficiency mismatch and when BB84 is implemented using weak coherent pulses under certain squashing assumptions. Technically, our framework is realized by including marginal constraints on the source to account for bypass effects, combined with existing numerical approaches for minimizing the key rate and squashing and dimension reduction techniques to handle photonic states of unbounded dimension. 
\end{abstract}
\maketitle

\ifarxiv\section{Introduction}\label{sec:intro}\else\noindent{\it Introduction.|}\fi
Quantum key distribution (QKD) enables two separated parties, Alice and Bob, to share private randomness in the presence of an unbounded quantum adversary~\cite{BB84,Ekert,BBM92,B92,Bruss98,Pirandola_2020}. Security can be established by specifying a model of the underlying source and measurements used by Alice and Bob to generate the raw key. Then, assuming the eavesdropper, Eve, can freely manipulate the quantum signals sent, bounds on the minimum private randomness contained in the raw key can be derived, conditioned on the observed value of certain parameters, and supported by authenticated classical communication between Alice and Bob~\cite{ShorPreskill,DW,Renner}. Further classical post-processing ultimately leads to a final key meeting the required security conditions.

In practice, real Eves may not just be bounded by the constraints of quantum theory. For example, motivated by the current status of quantum technologies, one might consider adding computational assumptions regarding Eve's access to a quantum memory~\cite{vyas2020everlasting}. Or, in the context of satellite based QKD~\cite{Bonato_2009,Moli-Sanchez09,Nauerth2013,Wang2013,Hosseinidehaj19}, a possible solution to long distance quantum communications~\cite{Liao_2017,Ren_2017,Liao_2018}, the so called ``wiretap channel'' has been explored~\cite{Vergoossen19,Pan20,Vazquez-Castro21}. This uses the assumption that for line-of-sight links between a satellite and a ground station, it would be challenging for an airborne eavesdropping object to intercept and resend any signal reliably. Eve is therefore likely positioned on the Earth's surface, outside a ``safe-zone'' surrounding the ground station. Consequently, she can only access a fraction of the signal transmitted from the satellite, leading to key rate improvements.  

Asserting that the underlying channel is a wiretap channel is an unverifiable assumption however. Instead, we would like to focus on the classical limitations of Eve that can be assessed by experimental data. This includes constraints derived from the physical system, or resource limitations, for example. Applying this idea to satellite QKD, it was recently proposed by Ghalaii et al.~\cite{ghalaii2023satellitebased} that Alice and Bob can directly monitor the satellite to ground link via classical means. This will alert the users to eavesdropping objects above a certain size, hence limiting the size of any undetected Eve. Consequently, her collection and resend capabilities are restricted, implying some of the signal sent from Alice could miss Eve entirely. Moreover, this signal might still find its way to Bob, resulting in a new QKD scenario featuring bypass channels. 

The \textit{bypass scenario} is shown in \cref{fig:bypass_gen}, which models such a channel. We consider a lossy channel between Alice and Eve, with transmittance $\eta_{AE}$, followed by another lossy channel between Eve and Bob, with transmittance $\eta_{EB}$. The values of $\eta_{AE}$ and $\eta_{EB}$ are to be bounded following classical monitoring techniques such as LIDAR, and determine Eve's restriction (i.e., the fraction of the signal she receives and can send to Bob). Additionally, signals not collected by Eve may still find their way to Bob's collection device. We model this using a channel between Alice and Bob that bypasses Eve. Importantly, whilst this channel is inaccessible to Eve, it is also unknown to Alice and Bob; they must therefore optimize over all such channels compatible with experimental observations and choose the worst case scenario to bound the key rate.  Aside from its relevance to satellite QKD, analyzing QKD security in the presence of bypass channels is an interesting mathematical problem in its own right, and demonstrates how to include physical limitations as eavesdropping restrictions.  

\begin{figure}[h]
\includegraphics[width=8.4cm]{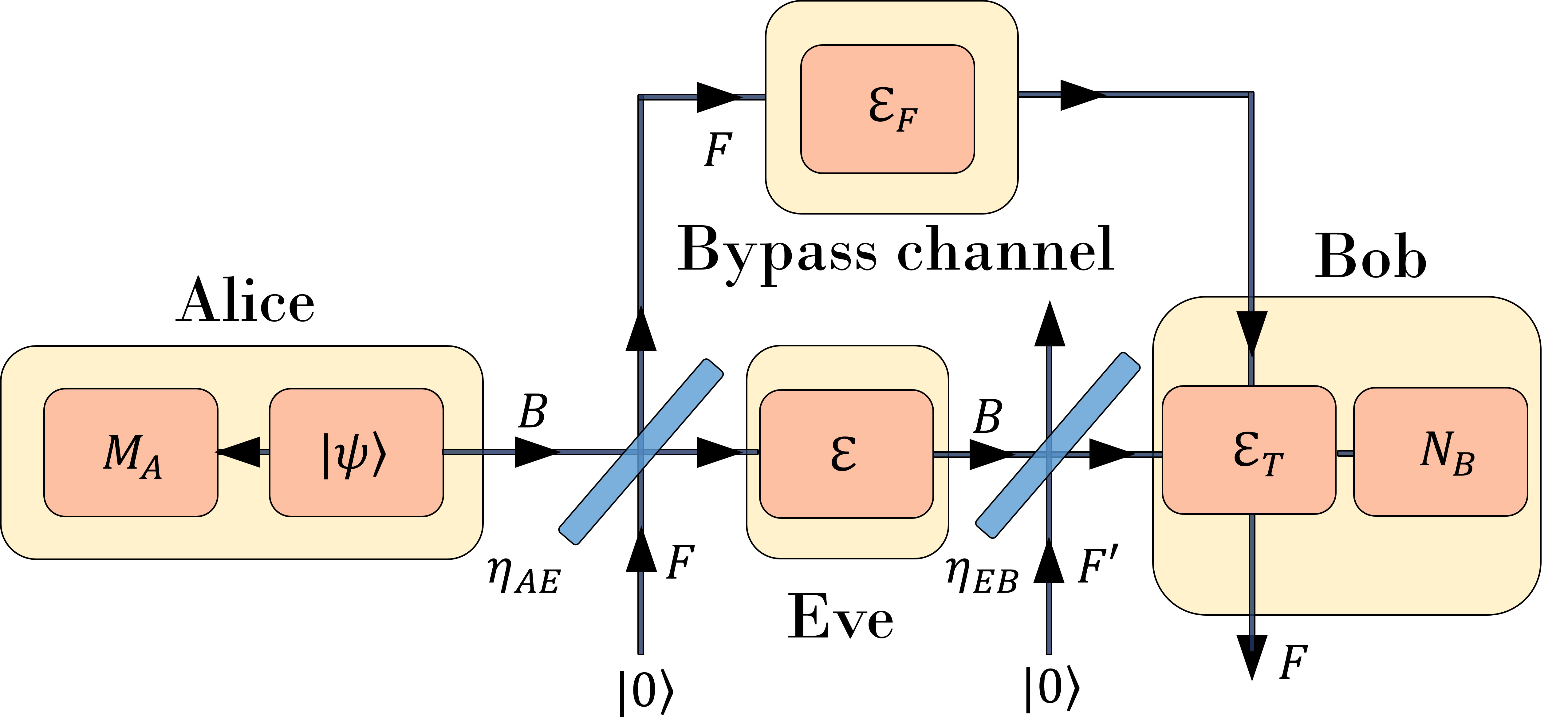}
\caption{The general bypass channel model. Each round, Alice prepares an entangled state $\ket{\psi}$ and measures one part, whilst sending the other to Bob along spatial mode $B$. First, the signal is interfered with the vacuum at a beam splitter with transmissivity $\eta_{AE}$, which models the lossy channel between Alice and Eve, who is described by an unknown quantum channel $\mathcal{E}$. After Eve's attack, the signal is interfered with another mode $F'$ at a beam splitter with transmittance $\eta_{EB}$ modeling a lossy channel between Eve and Bob, before reaching Bob's collection device, modeled by a channel $\mathcal{E}_{T}$. Any signal reflected at the first beam splitter travels along the bypass mode $F$, and undergoes an unknown quantum channel $\mathcal{E}_{F}$, before} reaching Bob's collection device. Importantly, no party has access to modes $F$ or $F'$ throughout the protocol, and the transmissivities $\eta_{AE}$ and $\eta_{EB}$ can be bounded using monitoring techniques. $\mathcal{E}_{F}$ and $\mathcal{E}_{T}$ on the other hand must be optimized over to obtain a lower bound on the key rate.
\label{fig:bypass_gen}
\end{figure}

Security analysis in the presence of bypass channels was first conducted in Ref.~\cite{ghalaii2023satellitebased}, and key rate improvements were found in both continuous variable (CV) and discrete variable (DV) cases within certain regimes of operation. Moreover, an easy to compute upper bound was derived, which served as a good approximation for the CV case with reverse reconciliation. For the DV case, it was found that single photon (SP) BB84, typically regarded as the ideal QKD protocol, was unaffected by such restricted eavesdropping. This conclusion was supported by both an analytical and numerical lower bound. On the other hand, using weak coherent pulses (WCP), Ref.~\cite{ghalaii2023satellitebased} found improvements beyond the single photon rate for very small $\eta_{AE}$. Broadly, this is due to Alice and Bob capitalizing on pulses with large photon numbers without the threat of photon number splitting attacks, highlighting how bypass channels require a change in intuition from the standard case.  

This left open a number of questions concerning DV QKD with bypass channels. Firstly, in Ref.~\cite{ghalaii2023satellitebased}, it was conjectured that the security analysis of SP-BB84 in the presence of bypass channels is tight, and the numerical framework supporting this was left to a future manuscript. Next, it would be interesting to understand, using this framework, if the bypass channel can benefit the single photon case at all; for example, device imperfections beyond those considered by Ref.~\cite{ghalaii2023satellitebased} could give rise to non-trivial bypass behavior. Finally, can numerical techniques also provide bounds in the WCP case? Importantly, does this also lead to tighter bounds on the key rate when the restriction on Eve is weaker, e.g., at higher values of $\eta_{AE}$? This is a well motivated question, since it is typically easier to guarantee weaker restrictions on Eve in practice, and a channel loss of 30 to 40 dB between Alice and Eve, as was required by Ref.~\cite{ghalaii2023satellitebased} for key rate improvements, may be unrealistically high. 

In this follow up work, we address these questions. Namely, we formulate a numerical framework for calculating key rates of DV-QKD protocols in the presence of bypass channels. This is achieved by modifying the technique of Ref.~\cite{Winick_2018}, which allows one to calculate reliable lower bounds on the asymptotic key rate using semidefinite programs (SDPs). We then reproduce the SP result of Ref.~\cite{ghalaii2023satellitebased}, confirming that standard single photon BB84 does not benefit from bypass channels. Next, we exploit the versatility of the technique to show that in the SP regime, a bypass model does increase the key rate under certain device imperfections. When we account for large mismatches in efficiencies between Bob's detectors, we improve key rates beyond that of conventional QKD with mismatched detectors~\cite{Winick_2018}, as well as extend the largest tolerable mismatch. Moreover, these improvements can even happen at moderate restrictions on Eve. Finally, we extend our numerical analysis to the WCP case. This involves modifying the dimension reduction technique of Refs.~\cite{Upadhyaya2021M,Upadhyaya2021,Upadhyaya2022} to account for bypass effects. We then show that there exist some scenarios where the WCP protocol benefits from moderate restrictions on Eve, which significantly differs from the regime studied in~\cite{ghalaii2023satellitebased}. This improvement is most prominent when restricting the maximum number of photons received by Bob. We also apply the dimension reduction techniques to lift this assumption, however, the resulting key rates are diminished, leaving improvements of this approach to future work.    

The paper is structured as follows. In \cref{sec:setting} we mathematically describe photonic DV-QKD protocols in the presence of bypass channels. In \cref{sec:security} we show how this formulation is compatible with existing numerical techniques for calculating the key rate, with some modifications. \cref{sec:SP} then describes how this is applied to SP-BB84, by first reinforcing the result of Ref.~\cite{ghalaii2023satellitebased}, and then extending the model to account for mismatched detectors at the receiver. Next, \cref{sec:WCP} extends the framework to WCP-BB84 and presents some results under a particular noise model. A discussion is presented in~\cref{sec:discussion} along with open questions. 

\ifarxiv\section{QKD with a bypass channel}\label{sec:setting}\else\noindent{\it QKD with a bypass channel.|}\fi 

Throughout this work, the Hilbert space associated to a quantum system $A$ is denoted by $\mathcal{H}_{A}$. The set of linear operators and density operators on $\mathcal{H}_{A}$ are denoted by $\mathcal{L}(\mathcal{H}_{A})$ and $\mathcal{D}(\mathcal{H}_{A})$, respectively. Classical systems are denoted by non-italic capital letters, e.g., $\mathsf{A}$. 

\begin{figure}[h]
\includegraphics[width=8.4cm]{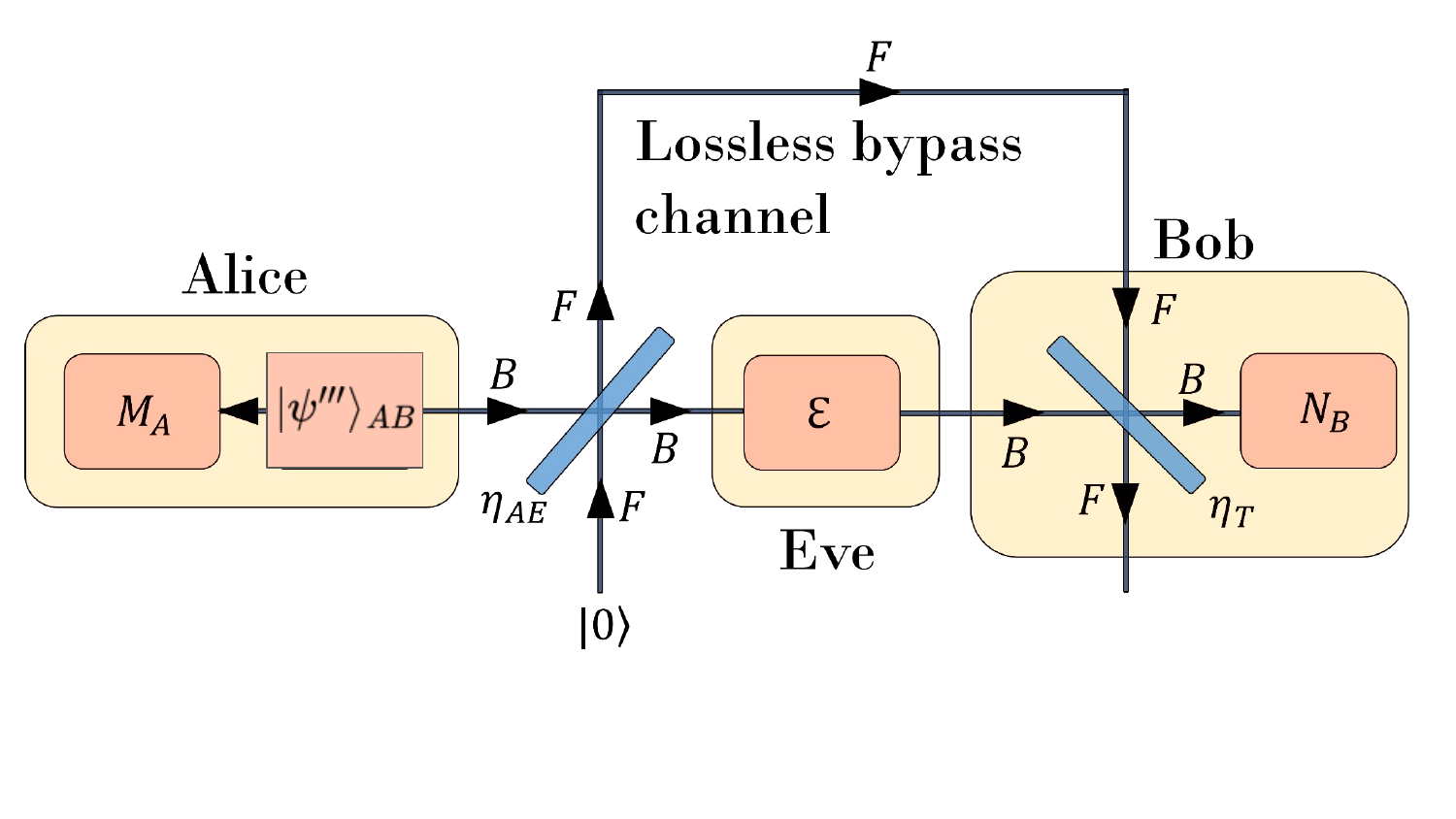}
\caption{The bypass channel model considered in this work. This model is a special case of \cref{fig:bypass_gen}, where the bypass channel is lossless (captured by choosing $\mathcal{E}_{F}$ as the identity channel), there is no loss between Eve and Bob (choosing $\eta_{EB} = 1$) and we model Bob's collection device $\mathcal{E}_{T}$ as a beam splitter with transmitivity $\eta_{T}$. Eve's attack is viewed as the action of a quantum channel $\mathcal{E}$ on mode $B$, which is related to the unitary description in the text via its Stinespring dilation $\mathcal{E}(\rho_{B}) = \tr_{E}[U_{BE}(\rho_{B} \otimes \ketbra{0}{0}_{E})U_{BE}^{\dagger}]$.}
\label{fig:bypass_lossless}
\end{figure}

Consider a photonic implementation of prepare and measure (P\&M) QKD in the presence of a lossless bypass channel, as shown in \cref{fig:bypass_lossless}. Note that this scenario is a special case of that shown in \cref{fig:bypass_gen}. However, it was observed in Ref.~\cite{ghalaii2023satellitebased} that a lossless bypass channel often favors Eve as it carries more of the burden that would have otherwise attributed to Eve. We therefore assume that the bypass channel is lossless in our work. We further take the case where $\eta_{EB} = 1$, and model Bob's collection device as a beam splitter with transmissivity $\eta_{T}$ (see \cite[Appendix C]{ghalaii2023satellitebased} for a justification of this model). These restrictions make the problem more manageable for our numerical approach, and we leave a more general treatment to future work.   

We model the global channel between Alice and Bob as a completely-positive-trace-preserving (CPTP) map acting on states in a photon number space $\mathcal{H}_{BF}$, with two spatial modes, $B$ and $F$, each with two polarization modes, $B_{H/V}$ and $F_{H/V}$. We denote by $E$ the system held by Eve, which is initialized in a state $\ket{0}_{E}$. Alice first prepares an entangled state $\ket{\psi'''}_{AB}$ between a reference system $A$, and spatial mode $B$. Specifically, $\mathcal{H}_{A} = \mathbb{C}^{2} \otimes \mathbb{C}^{2}$, corresponding to two basis choices, each with two outcomes (Alice's key bit)\footnote{Note that we have employed the source replacement scheme~\cite{BBM92,Ferenczi12}, where the system $A$ stores information about which state Alice prepared.}. We also assume mode $F$ is initially in the vacuum state, and denote the global initial state by
\begin{equation}
    \ket{\psi^{\text{init}}}_{ABFE} = \ket{\psi'''}_{AB} \otimes \ket{0}_{F} \otimes \ket{0}_{E} \in \mathcal{H}_{A} \otimes \mathcal{H}_{BF} \otimes \mathcal{H}_{E}. 
\end{equation}
Alice's measurement is described by the POVM $M_{A} = \{M_{a,x}\}_{a,x}$, where $M_{a,x} = \ketbra{a,x}{a,x}$ for $a,x \in \{0,1\}$, with $a$ representing the key bit and $x$ being the basis choice, and $\{\ket{i}\}_{i=0}^{d-1}$ denotes the standard basis for $\mathbb{C}^{d}$.

The two spatial modes $B$ and $F$ correspond to the two arms of the interferometer describing the channel between Alice and Bob. The polarization modes, $B_{H/V}$, are then used for encoding. Let $\eta_{AE}$ denote the transmitivity of the first beam splitter in the interferometer. This determines the signal fraction sent to Eve along arm $B$, and the fraction that bypasses her, along arm $F$. Eve then performs her action on $B$, which we describe by a unitary $U_{BE}$ on $\mathcal{H}_{BF} \otimes \mathcal{H}_{E}$, and both signals are recombined at a beam splitter with transmitivity $\eta_{T}$. Crucially, $U_{BE}$ acts trivially on mode $F$. The action of the channel is then described as follows:
\begin{equation}
    \begin{aligned}
    \ket{\psi^{\text{init}}}_{ABFE} &\xrightarrow[]{U_{1}} \ket{\psi''}_{ABF} \otimes \ket{0}_{E} \\ &\xrightarrow[]{U_{BE}} \ket{\psi'}_{ABFE} \\ &\xrightarrow[]{U_{2}} \ket{\psi}_{ABFE},
    \end{aligned} 
\end{equation}
where the unitaries $U_{1}$ and $U_{2}$ each describe a beam splitter on $BF$, with transmitivity $\eta_{AE}$ and $\eta_{T}$, respectively. Bob then measures mode $B$, modeled by a POVM on $\mathcal{H}_{BF}$, $N_{B} = \{N_{b,y}\}_{b,y}$, where $(b,y) \in \{0,1\}^{2} \cup \{(\perp,\perp)\}$ denotes the outcome of his measuring device. The outcome $(b,y) = (\perp,\perp)$ corresponds to a ``no-click'' event, whilst $(b,y) \in \{0,1\}^{2}$ corresponds to a basis choice $y$ and key bit $b$. ``Double-click'' events are randomly mapped to $b = 0$ or $b=1$ with equal probability, and hence not included as an explicit outcome of Bob's measurement device. Note that it is assumed no parties have access to the spatial mode $F$. Specifically, the POVM $N_{B}$ acts trivially on $F$. We call any tuple
\begin{equation}
    \big (\ket{\psi^{\text{init}}},U_{1},U_{BE},U_{2},M_{A},N_{B} \big)
\end{equation}
a quantum strategy for the bypass scenario. 

The post-measurement state shared between Alice, Bob and Eve is given by
\begin{equation}
    \rho_{\mathsf{ABXY}E} = \sum_{a,b,x,y} \ketbra{a,b}{a,b}_{\mathsf{AB}} \otimes \ketbra{x,y}{x,y}_{\mathsf{XY}} \otimes \rho_{E}^{a,b,x,y},
\end{equation}
where the sum ranges over possible outcomes of the measurements, $\mathsf{ABXY}$ are the classical registers storing those outcomes, and 
\begin{equation}
    \rho_{E}^{a,b,x,y} = \tr_{ABF}\big[ (M_{a,x} \otimes N_{b,y} \otimes \id_{E})\ketbra{\psi}{\psi}_{ABFE}\big]
\end{equation}
is Eve's post-measurement state conditioned on the outcome $(\mathsf{A},\mathsf{B},\mathsf{X},\mathsf{Y}) = (a,b,x,y)$. Alice and Bob then post-select on instances where Bob's device recorded a click, and their basis choice agrees, i.e., $\mathsf{X}= \mathsf{Y}$, which is described by the projector 
\begin{equation}
    \Pi = \sum_{x \in \{0,1\}} \id_{\mathsf{AB}} \otimes \ketbra{x,x}{x,x}_{\mathsf{XY}} \otimes \id_{E}.
\end{equation}
The re-normalized post-selected state is denoted by
\begin{multline}
    \tilde{\rho}_{\mathsf{ABXY}E}  = \frac{1}{\text{Pr}[\text{pass}]} \Pi \rho_{\mathsf{ABXY}E} \Pi \\
    = \frac{1}{\text{Pr}[\text{pass}]} \sum_{x \in \{0,1\}} \sum_{a,b \in \{0,1\}} \ketbra{a,b}{a,b}_{\mathsf{AB}} \\ \otimes \ketbra{x,x}{x,x}_{\mathsf{XY}} \otimes \rho_{E}^{a,b,x,x}, \label{eq:pms}
\end{multline}
where
\begin{equation}
    \text{Pr}[\text{pass}] = \tr[\Pi \rho_{\mathsf{ABXY}E} \Pi] = \sum_{x \in \{0,1\}} \sum_{a,b \in \{0,1\}} \tr[\rho_{E}^{a,b,x,x}]
\end{equation}
is the probability the post-selection filter is passed. 

\ifarxiv\section{Security analysis}\label{sec:security}\else\noindent{\it Security analysis.|}\fi 

\subsection{Key rate formula}

Given a post-selected state of a single round, $\tilde{\rho}_{\mathsf{ABXY}E}$, Alice and Bob can compute the secret key rate. In the asymptotic regime, the two relevant quantities are the conditional entropies $H(\mathsf{A}|\mathsf{XY}E)$ and $H(\mathsf{A}|\mathsf{BXY})$, where the latter quantifies the cost of Bob reconciling his key with Alice, and the former captures the fraction of Alice's key that is secret from Eve. The asymptotic key rate is then given by the Devetak-Winter formula~\cite{DW,Winick_2018}:
\begin{equation}
    r^{\infty} = \text{Pr}[\text{pass}]\Big( \inf_{\tilde{\rho} \in \tilde{\mathcal{F}}} \, H(\mathsf{A}|\mathsf{XY}E)_{\tilde{\rho}} \\ - H(\mathsf{A}|\mathsf{BXY}) \Big), \label{eq:rate_1}
\end{equation}
where both $\text{Pr}[\text{pass}]$ and $H(\mathsf{A}|\mathsf{BXY})$ can be calculated from the observed measurement statistics, whilst the infimum is taken over all post-measurement states $\tilde{\rho}$ compatible with those statistics, which we denote by a feasible set $\tilde{\mathcal{F}}$ to be defined later. Asymptotic rates can also be used as a basis for finite size security under general attacks using techniques such as the entropy accumulation theorem~\cite{DFR,ADFRV,Metger_2022,Metger_2023}. Thus, to prove security, we wish to lower bound $\inf_{\tilde{\rho} \in \tilde{\mathcal{F}}} \text{Pr}[\text{pass}] \, H(\mathsf{A}|\mathsf{XY}E)_{\tilde{\rho}}$. Towards this goal, we provide the following rewriting of the optimization problem.

\begin{lemma}
Let $\big (\ket{\psi^{\mathrm{init}}},U_{1},U_{BE},U_{2}, M_{A},N_{B} \big)$ be any quantum strategy in the bypass scenario that gives rise to the post-selected state $\tilde{\rho}$ in \cref{eq:pms}. Then the following holds:
\begin{multline*}
    \mathrm{Pr}[\mathrm{pass}] \, H(\mathsf{A}|\mathsf{XY}E)_{\tilde{\rho}} \\ = \sum_{x \in \{0,1\}} D\Big( V_{x}\rho_{ABF,x}'V_{x}^{\dagger} \big \| (\mathcal{Z}_{\tilde{A}} \otimes \mathcal{I}_{ABF}) \big[ V_{x}\rho_{ABF,x}'V_{x}^{\dagger} \big] \Big),
\end{multline*}
where
\begin{equation*}
\begin{aligned}
    \rho_{ABF,x}' &= \tr_{E}\big[\ketbra{\psi_{x}'}{\psi_{x}'}\big], \\ \ket{\psi_{x}'} &= \Bigg(\id_{A} \otimes U_{2}^{\dagger}\sqrt{\sum_{b \in \{0,1\}}N_{b,x}}U_{2} \otimes \id_{E}\Bigg)\ket{\psi'},
\end{aligned}
\end{equation*}
$V_{x}: \mathcal{H}_{A}\otimes \mathcal{H}_{BF} \to \mathcal{H}_{\tilde{A}}\otimes \mathcal{H}_{A}\otimes \mathcal{H}_{BF}$, where $\mathcal{H}_{\tilde{A}} \cong \mathbb{C}^{2}$, is defined by
\begin{equation*}
    V_{x} = \sum_{a \in \{0,1\}} \ket{a}_{\tilde{A}} \otimes M_{a,x} \otimes \id_{BF}
\end{equation*}
and satisfies $\sum_{x}V_{x}^{\dagger}V_{x} = \id_{ABF}$, and $\mathcal{Z}_{\tilde{A}}:\mathcal{L}(\mathcal{H}_{\tilde{A}}) \to \mathcal{L}(\mathcal{H}_{\mathsf{A}})$ is the pinching channel on $\tilde{A}$,
\begin{equation*}
    \mathcal{Z}_{\tilde{A}}[\sigma] = \sum_{a \in \{0,1\}} \bra{a}\sigma \ket{a}\ketbra{a}{a}_{\mathsf{A}}
\end{equation*}
for all $\sigma \in \mathcal{L}(\mathcal{H}_{\tilde{A}})$. \label{lem:relEnt}
\end{lemma}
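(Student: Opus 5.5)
The plan is to reduce the left-hand side to a sum over bases of conditional entropies of classical-quantum states, and then apply the standard identity (as in Winick et al.~\cite{Winick_2018} and Coles) relating the conditional entropy of a pinched pure state to a relative entropy.

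\emph{Step 1: split by basis.} Since $\mathsf{XY}$ is classical and, after post-selection, supported only on $\mathsf{X}=\mathsf{Y}=x$, we have
\begin{equation*}
H(\mathsf{A}|\mathsf{XY}E)_{\tilde\rho}=\sum_x \Pr[x|\text{pass}]\,H(\mathsf{A}|E)_{\tilde\rho_x}.
\end{equation*}
Multiplying by $\Pr[\text{pass}]$ gives $\sum_x p_x H(\mathsf{A}|E)_{\tilde\rho_x}$, where $p_x=\sum_{a,b}\tr\rho_E^{a,b,x,x}$. Bob's key bit $b$ is traced out in $H(\mathsf{A}|E)$, so the relevant unnormalized cq state is
\begin{equation*}
\sigma^x_{\mathsf{A}E}=\sum_a \ketbra{a}{a}\otimes\rho_E^{a,x},\qquad
\rho_E^{a,x}=\tr_{ABF}\Big[\big(M_{a,x}\otimes\textstyle\sum_b N_{b,x}\otimes\id_E\big)\ketbra{\psi}{\psi}\Big].
\end{equation*}
Because $\ket\psi=U_2\ket{\psi'}$ with $U_2$ acting only on $BF$, I rewrite $\sum_b N_{b,x}$ conjugated by $U_2$ as $U_2^\dagger\sqrt{\textstyle\sum_b N_{b,x}}\,U_2$ squared. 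I then use cyclicity of the partial trace over $BF$ to move one square-root factor to the bra. Since $M_{a,x}$ is a projector, I split it as $M_{a,x}^2$ in the same way. This yields
\begin{equation*}
\rho_E^{a,x}=\tr_{ABF}\big[(M_{a,x}\otimes\id)\ketbra{\psi_x'}{\psi_x'}(M_{a,x}\otimes\id)\big].
\end{equation*}

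\emph{Step 2: isometric embedding.} Define the unnormalized pure vector
\begin{equation*}
\ket{\phi_x}=(V_x\otimes\id_E)\ket{\psi_x'}=\sum_a\ket a_{\tilde A}\otimes(M_{a,x}\otimes\id)\ket{\psi_x'}.
\end{equation*}
Then $\sigma^x_{\mathsf{A}E}=(\mathcal Z_{\tilde A}\otimes\mathcal I_E)[\tr_{ABF}\ketbra{\phi_x}{\phi_x}]$, and $\tr_E\ketbra{\phi_x}{\phi_x}=V_x\rho'_{ABF,x}V_x^\dagger$. The identity $\sum_xV_x^\dagger V_x=\sum_{a,x}M_{a,x}=\id$ is immediate from the $M_{a,x}$ forming a projective measurement.

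\emph{Step 3: the key identity.} Write $R=ABF$ and $\omega=V_x\rho'_{ABF,x}V_x^\dagger$. I expand
\begin{equation*}
D(\omega\|\mathcal Z(\omega))=H(\mathcal Z(\omega))_{\tilde AR}-H(\omega)_{\tilde AR}.
\end{equation*}
This uses that $\tr[\omega\log\mathcal Z(\omega)]=\tr[\mathcal Z(\omega)\log\mathcal Z(\omega)]$, since $\log\mathcal Z(\omega)$ is block diagonal in $\tilde A$. On the other side, $H(\mathsf A|E)_{\sigma^x}=H(\mathsf AE)_{\sigma^x}-H(E)$. Two facts then finish the argument.
\begin{itemize}
\item By purity of $\ket{\phi_x}$ on $\tilde ARE$, and because $\mathcal Z_{\tilde A}$ does not change the $E$-marginal, $H(E)=H(\tilde AR)_\omega$.
\item The pinched states on $\tilde AR$ and on $\mathsf AE$ are both direct sums over $a$ of the two marginals of the single pure vector $(M_{a,x}\otimes\id)\ket{\psi_x'}$. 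These marginals have identical nonzero spectra, so $H(\mathcal Z(\omega))_{\tilde AR}=H(\sigma^x)_{\mathsf AE}$.
\end{itemize}
Summing over $x$ then gives the claim.

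The main obstacle is bookkeeping with unnormalized states. I would handle it by normalizing each $\ket{\phi_x}$ by $p_x=\langle\phi_x|\phi_x\rangle$ and applying the identity to the normalized state. Then I use $D(p\omega\|p\tau)=p\,D(\omega\|\tau)$ to absorb the factor $p_x$. This requires checking that the entropy definitions for unnormalized arguments in the relative-entropy expression are consistent with this scaling. It also requires treating the case $p_x=0$ separately, where both sides vanish.
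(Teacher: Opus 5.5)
Your proposal is correct and takes essentially the same route as the paper: split over the sifted basis $x$, embed via $V_x$ into a pure state, and use the identity $H(\mathsf{A}|E)=D\big(\omega\,\|\,\mathcal{Z}_{\tilde A}[\omega]\big)$. The paper cites this identity as Coles' theorem, and its proof in the appendix is exactly your Step~3 argument: purity handles the $H(E)$ term, and the two marginals of each $(M_{a,x}\otimes\id)\ket{\psi'_x}$ have equal nonzero spectra, which handles the pinched term. The only difference is ordering. You absorb $U_2$ at the outset by cyclicity of $\tr_{ABF}$ and work with $\ket{\psi'_x}$ throughout, whereas the paper first works with the post-beam-splitter state $\ket{\psi_x}$ and removes $U_2$ at the end, using unitary invariance of $D$ and the fact that $U_2$ commutes with $V_x$ and $\mathcal{Z}_{\tilde A}$.
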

\noindent See \cref{app:proof1} for proof. In \cref{lem:relEnt}, $D(\rho \| \sigma)$ is the quantum relative entropy between two states $\rho,\sigma \in \mathcal{D}(\mathcal{H})$, given by $\tr[\rho \log(\rho)] - \tr[\rho \log(\sigma)]$ when $\text{Supp}[\rho] \subseteq \text{Supp}[\sigma]$, and $+ \infty$ otherwise. 

Lemma \ref{lem:relEnt} allows us to simplify the optimization \eqref{eq:rate_1} in two ways. Firstly, the objective function is now independent of Eve's system, which a priori has an unknown dimension. Secondly, the states $\rho_{ABF,x}'$ occur before the final beam splitter $U_{2}$. This will allow us to constrain the problem.   

\subsection{Problem constraints}

During key exchange, Alice and Bob collect statistics of the form  
\begin{equation}
\begin{aligned}
    p_{a,b,x,y} &= \bra{\psi}(M_{a,x} \otimes N_{b,y} \otimes \id_{E})\ket{\psi} \\ &= \bra{\psi'}(M_{a,x} \otimes U_{2}^{\dagger}N_{b,y}U_{2} \otimes \id_{E})\ket{\psi'} \\
    &= \tr[(M_{a,x} \otimes N_{b,y}')\rho_{ABF}'],
\end{aligned}    \label{eq:const1}
\end{equation}
where $N_{b,y}' = U_{2}^{\dagger} N_{b,y} U_{2}$ are Bob's rotated POVM elements, and
\begin{equation}
    \rho_{ABF}' = \tr_{E}[\ketbra{\psi'}{\psi'}].
\end{equation}

We next account for the fact that Eve does not have access to system $A$ or mode $F$. Consider the following series of equalities:
\begin{equation}
\begin{aligned}
    \tr_{B}[\rho_{ABF}'] &= \tr_{BE}[\ketbra{\psi'}{\psi'}] \\
    &= \tr_{BE}[U_{BE}(\ketbra{\psi''}{\psi''}_{ABF} \otimes \ketbra{0}{0}_{E})U_{BE}^{\dagger}] \\
    &= \tr_{B}[\ketbra{\psi''}{\psi''}],
\end{aligned} \label{eq:ptrConst}
\end{equation}
where we used the cyclic property of the partial trace to remove the unitary $U_{BE}$, which only acts on systems $B$ and $E$\footnote{Note that when taking the partial trace over $B$ in \cref{eq:ptrConst}, we implicitly imposed a tensor product structure $\mathcal{H}_{BF} = \mathcal{H}_{B} \otimes \mathcal{H}_{F}$, where both $\mathcal{H}_{B}$ and $\mathcal{H}_{F}$ are Fock spaces.}. In other words, the reduced density operator for Alice's system and the bypass mode is invariant under Eve's attack. 

For $x \in \{0,1\}$, let 
\begin{equation}
    P_{x} := U_{2}^{\dagger}\sqrt{\sum_{b\in \{0,1\}}N_{b,x}}U_{2},
\end{equation}
and note that according to the definition in \cref{lem:relEnt},
\begin{equation}
    \rho_{ABF,x}' = P_{x} \rho_{ABF}' P_{x}. 
\end{equation}
We now have a sufficient number of constraints to characterize the space of feasible $\rho_{ABF}'$. By defining the completely-positive (CP) map $\mathcal{G}_{x}[\sigma_{ABF}] = V_{x}P_{x}\sigma P_{x} V_{x}^{\dagger}$, we arrive at the following optimization problem:
\begin{equation}
    \begin{gathered}
        \inf \ \sum_{x \in \{0,1\}} D\Big( \mathcal{G}_{x}[\rho_{ABF}'] \big \| (\mathcal{Z}_{\tilde{A}} \otimes \mathcal{I}_{ABF}) \big[ \mathcal{G}_{x}[\rho_{ABF}'] \big] \Big) \\
        \text{s.t.} \  \tr[(M_{a,x} \otimes N_{b,y}')\rho_{ABF}'] = p(a,b,x,y), \ \forall a,b,x,y, \\
         \tr_{B}[\rho_{ABF}'] = \tr_{B}[\ketbra{\psi''}{\psi''}], \\
         \rho_{ABF}' \in \mathcal{D}(\mathcal{H}_{A} \otimes \mathcal{H}_{B}\otimes \mathcal{H}_{F}). \label{eq:finalOpt}
    \end{gathered}
\end{equation}
We denote the feasible set of this optimization problem by $\mathcal{F}$ and label all the constraints (excluding normalization and positivity) by $\tr[\Gamma_{k}\rho_{ABF}'] = \gamma_{k}$ for an appropriate set of operators $\{\Gamma_{k}\}$ and real numbers $\{\gamma_{k}\}$. 

The optimization in \cref{eq:finalOpt} is analogous to that formulated in standard prepare and measure QKD~\cite{Winick_2018}, except for some key differences that account for bypass effects. Firstly, the problem is formulated in terms of the state before the final beam splitter, $\rho_{ABF}'$, rather than the state directly measured by Bob. The action of this beam splitter is accounted for by replacing Bob's POVM elements $N_{b,y}$ with the rotated ones $U_{2}^{\dagger}N_{b,y}U_{2}$. Secondly, we add source replacement constraints for both Alice's system and the bypass arm. This accounts for the action of the first beam splitter, and restricts the degrees of freedom accessible to Eve, i.e., she can only manipulate the spatial mode $B$.  

\subsection{Reliable lower bounds} \label{sec:lowbnds}
Developments in the security analysis of QKD protocols have employed numerical techniques to tackle optimization problems such as \cref{eq:finalOpt}~\cite{Coles_2016,Primaatmaja_2019,Winick_2018,Bunandar_2020,George_21,Zhou_22,tavakoli2023semidefinite,Ara_jo_2023,KS2026}. In particular, Ref.~\cite{Winick_2018} developed a robust technique to find reliable lower bounds using SDPs, which relies on two steps:
\begin{itemize}
    \item Step 1: An adapted Frank-Wolfe method~\cite{FW_56} is used to find a close to optimal upper bound. This corresponds to a near optimal attack by Eve, $\rho^{*}$.

    \item Step 2: The objective function is linearized at $\rho^{*}$. Due to the convexity of the objective function, minimizing the linearized objective function (which is an SDP) provides a lower bound on the original problem, and is tight when $\rho^{*}$ is optimal, i.e., Step 1 converged.
\end{itemize}
We can employ this technique to obtain numerical lower bounds on \cref{eq:finalOpt} when the Hilbert space $\mathcal{H}_{B} \otimes \mathcal{H}_{F}$ has a finite dimension. This follows from the fact that \cref{eq:finalOpt} is an instance of~\cite[Eq. (6)]{Winick_2018}. Specifically, the feasible set is the set of density operators that satisfy multiple trace constraints, and the objective function is of the form $D(\mathcal{G}[\rho]\|(\mathcal{Z}\circ \mathcal{G})[\rho])$ where $\mathcal{G}$ is a completely-positive (CP) map and $\mathcal{Z}$ is a unital CPTP map. When $\mathcal{H}_{B} \otimes \mathcal{H}_{F}$ is not finite dimensional, a situation commonly encountered in photonic realizations, we apply dimension reduction \cite{Upadhyaya2021,Upadhyaya2021M,Upadhyaya2022} to bound the infinite dimensional problem with a finite dimensional one (see \cref{sec:dr}). We then solve all SDPs using the Python package CVXPY~\cite{cvx1,cvx2}.   

We now elaborate further on how to apply the approach of Ref.~\cite{Winick_2018} to our problem. For Step 1, the Frank-Wolfe method described in \cite[Algorithm 1]{Winick_2018} can be employed, resulting in a near optimal attack $\rho^*$:
\begin{enumerate}
    \item Let $\delta > 0$, $\rho_{0} \in \mathcal{F}$ and set $i = 0$.
    \item Compute $\Delta \rho = \text{arg} \, \text{min}_{\Delta \rho} \tr[(\Delta \rho)^{\text{T}} \nabla f_{\epsilon}(\rho_{i})]$ subject to $\Delta \rho + \rho_{i} \in \mathcal{F}$. 
    \item If $|\tr[(\Delta \rho)^{\text{T}} \nabla f_{\epsilon}(\rho_{i})]| < \delta$ then output $\rho^* = \rho_{i}$. 
    \item If not, compute $\lambda = \text{arg} \, \text{min}_{\lambda} f_{\epsilon}(\rho_{i} + \Delta \rho)$ subject to $\lambda \in (0,1)$.
    \item Set $\rho_{i+1} = \rho_{i} + \lambda \Delta \rho$, $i \to i+1$ and go to 2.
\end{enumerate}
In the above, we defined the modified objective function $f_{\epsilon} : \mathcal{D}(\mathcal{H}_{A}\otimes \mathcal{H}_{BF}) \to \mathbb{R}$, $f_{\epsilon}(\rho) := \sum_{x\in \{0,1\}} f_{\epsilon}^{x}(\rho)$, where for $\epsilon > 0$,
\begin{equation}
    f_{\epsilon}^{x}(\rho) := D\Big( \mathcal{G}_{\epsilon}^{x}[\rho] \big \| (\mathcal{Z} \circ \mathcal{G}_{\epsilon}^{x})[\rho] \Big),
\end{equation}
$\mathcal{G}_{\epsilon}^{x} = \mathcal{D}_{\epsilon} \circ \mathcal{G}_{x}$, $\mathcal{Z} = \mathcal{Z}_{\tilde{A}}\otimes \mathcal{I}_{ABF}$ and $\mathcal{D}_{\epsilon}[\sigma] = (1-\epsilon)\sigma + \epsilon\id/\mathrm{dim}[\mathcal{H}]$ for $\sigma \in \mathcal{L}(\mathcal{H})$ is the depolarizing channel. By applying~\cite[Lemma 1]{Winick_2018}, the gradient of $f_{\epsilon}$ evaluated at a point $\rho$ exists for all $\rho \in \mathcal{D}(\mathcal{H}_{A}\otimes \mathcal{H}_{BF})$, and is given by $\nabla f_{\epsilon}(\rho)^{\text{T}} = \sum_{x} \nabla f_{\epsilon}^{x}(\rho)^{\text{T}} \in \mathcal{L}(\mathcal{H}_{A} \otimes \mathcal{H}_{BF})$, where
\begin{multline}
    \nabla f_{\epsilon}^{x}(\rho)^{\text{T}} = ((\mathcal{G}_{\epsilon}^{x})^{\dagger} \circ \log \circ \mathcal{G}_{\epsilon}^{x})[\rho] \\- ((\mathcal{G}_{\epsilon}^{x})^{\dagger} \circ \log \circ \mathcal{Z} \circ \mathcal{G}_{\epsilon}^{x})[\rho].
\end{multline}
Furthermore, the permuted function $f_{\epsilon}$ is continuous in $\epsilon$ for a fixed state \cite[Lemma 8]{Winick_2018}:
\begin{equation}
        |f_{0}(\rho) - f_{\epsilon}(\rho)| \leq \zeta_{\epsilon},
\end{equation}
where $\zeta_{\epsilon} = 2\epsilon(d'-1)\log \frac{d'}{\epsilon (d'-1)}$ and $d' = \mathrm{dim}[\mathcal{H}_{\tilde{A}ABF}]$. For our numerical examples, we choose $\epsilon = 10^{-8}$.

For Step 2, we employ \cite[Theorem 2]{Winick_2018} given a near optimal guess $\rho^*$ from Step 1. The result is a reliable lower bound on \cref{eq:finalOpt}, of the form
\begin{equation}
        f_{\epsilon}(\rho^*) - \tr[(\rho^*)^{\mathrm{T}} \nabla f_{\epsilon}(\rho^*)] + \beta - 2\zeta_{\epsilon}
    \end{equation}
    where $\beta$ is the solution to the following SDP:
    \begin{equation}
        \begin{aligned}
            \sup \ &\sum_{k} \gamma_{k} \, y_{k} \\
            \mathrm{s.t.} \ & \sum_{k} y_{k} \Gamma_{k}^{\mathrm{T}} \succeq \nabla f_{\epsilon}(\rho^*). 
        \end{aligned}
    \end{equation}

\subsection{Dimension Reduction}\label{sec:dr}

In photonic realizations of QKD, we rarely encounter sources that emit finite-dimensional states. Realistic sources (such as WCP) are instead modeled using a photon number space of unbounded dimension. Thus, to apply the previously discussed techniques to more realistic scenarios, we need to reduce an infinite dimensional problem to a finite one. The dimension reduction method \cite{Upadhyaya2021} achieves this by $(i)$ choosing a finite dimensional subspace, $(ii)$ bounding the weight of the state outside that subspace, $(iii)$ computing a correction term to the objective function and $(iv)$ constructing a relaxed finite dimensional feasible set. 

In the formulation of the problem in \cref{eq:finalOpt}, both $\mathcal{H}_B$ and $\mathcal{H}_F$ are infinite dimensional photon number spaces for two polarization modes. Hence, we need to choose a finite subspace of $\mathcal{H}_B \otimes \mathcal{H}_F$. Following the guidelines in Ref.~\cite{Upadhyaya2021}, choosing a projector that commutes with the POVMs defining the feasible set and with the POVMs defining the objective function gives a better lower bound on the key rate than a projector without these properties. 

First, we note that Bob's POVMs $N_{b,y}$ are block-diagonal in the total number of photons in $\mathcal{H}_{B} \otimes \mathcal{H}_{F}$ (see \cref{app:POVM} for details). That is, they satisfy 
\begin{align}
N_{b,y}&=\sum_{n=0}^\infty \sum_{k=0}^{n} (\Pi_{B}^{n-k} \otimes \Pi_{F}^{k})N_{b,y}(\Pi_{B}^{n-k} \otimes \Pi_{F}^{k}),
\end{align}
where
\begin{equation}
    \Pi_{B}^{k}:=\sum_{l=0}^k \ketbra{l_{B_{H}},(k-l)_{B_{V}}}
\end{equation}
and $\ket{k_{B_{H}},l_{B_{V}}} \in \mathcal{H}_{B_{H}} \otimes \mathcal{H}_{B_{V}} = \mathcal{H}_{B}$ is a Fock state. We define $\Pi_{F}^{k}$ similarly. Next, notice the final beam splitter is described by a unitary $U_{2}$, which preserves the total number of photons in the input modes, i.e., $U_2$ also satisfies 
\begin{align}
U_{2}&=\sum_{n=0}^\infty \sum_{k=0}^{n} (\Pi_{B}^{n-k} \otimes \Pi_{F}^{k})U_{2}(\Pi_{B}^{n-k} \otimes \Pi_{F}^{k}).
\end{align}
Thus, both $N_{b,y}$ and $U_2$ commute with any (linear combination of) projectors of the form $\Pi_{B}^{n-k} \otimes \Pi_{F}^{k}$. Consequently, we find the rotated measurements $N_{b,y}' = U_{2}^{\dagger}N_{b,y}U_{2}$ have the same commutation properties, as does the extension to Alice POVMs, $M_{a,x} \otimes N_{b,y}'$. These POVMs define both the key map and the constraints (cf.\ \cref{lem:relEnt}). 

Let $\Omega$  be any (finite) linear combination of the form $\Omega_{n} = \sum_{k=0}^{n}\Pi_{B}^{n-k} \otimes \Pi_{F}^{k}$. The above facts imply that, when choosing the projection $\Omega$ for the dimension reduction technique, the correction term is identically zero by \cite[Theorem 3]{Upadhyaya2021}, and we can use the loosened constraints of \cite[Theorem 4]{Upadhyaya2021}. Referring to \cite[Eq. (4.101)]{Upadhyaya2021M}, the following finite-dimensional optimization gives lower bounds on \cref{eq:finalOpt}:
\begin{equation}
    \begin{gathered}
        \inf \ \sum_{x \in \{0,1\}} D\Big( \mathcal{G}_{x}[\rho_{ABF}'] \big \| (\mathcal{Z}_{\tilde{A}} \otimes \mathcal{I}_{ABF}) \big[ \mathcal{G}_{x}[\rho_{ABF}'] \big] \Big) \\
        \text{s.t.} \ 1-W \leq \tr[\rho_{ABF}'] \leq 1\\
        p(a,b,x,y) -W \leq \tr[(M_{a,x} \otimes N_{b,y}')\rho_{ABF}'] \\ \hspace{5cm}\leq p(a,b,x,y), \forall a,b,x,y, \\
         \frac{1}{2}\big \| \tr_{B}[\rho_{ABF}'] - \tr_{B}[\ketbra{\psi''}{\psi''}] \big \|_1 \leq \sqrt{W}, \\
         \rho_{ABF}' \in \mathcal{P}(\mathcal{H}_{A} \otimes \Omega[\mathcal{H}_{BF}]). \label{eq:finalOptDR}
    \end{gathered}
\end{equation}
In the above, $\mathcal{P}(\mathcal{H})$ is the set of positive operators on $\mathcal{H}$, $\Omega[\mathcal{H}_{BF}]$ is the subspace of $\mathcal{H}_{B} \otimes \mathcal{H}_{F}$ obtained by applying the projector $\Omega$, and $W \in [0,1]$ bounds the weight of $\rho'_{ABF}$ outside the projected subspace. In particular, when $\Omega = \id$ we recover \cref{eq:finalOpt}. 

The weight $W$ should be estimated using the data collected by Alice and Bob during key exchange, such as the frequency of double click events by Bob's detectors~\cite{Zhang_2021,Li_2020,kamin2024}. This information can be used to upper bound the value of $W$ compatible with the observed statistics. In \cref{app:weight}, we demonstrate this for the special case $\eta_{T} = 1$ by deriving an analytical bound on $W$ in terms of the double click probability on Bob's side and the marginal state on $F$. We leave the question of tighter and more general bounds (i.e., ones that hold for $\eta_{T} < 1$) to future work. 

Furthermore, for a given choice of projection $\Omega$, defining the marginal $\tr_{B}[\tilde{\rho}_{ABF}']$ in \cref{eq:finalOptDR} is in fact non-trivial. This is a technical obstacle that is specific to the bypass scenario, and we elaborate on why this is the case and provide a resolution in \cref{app:pTrace}. We can then solve the dual formulation of \eqref{eq:finalOptDR} as described in \cite[Section V.B]{Upadhyaya2021}.

\ifarxiv\section{Application to BB84 with single photons}\label{sec:SP}\else\noindent{\it Application to BB84 with single photons.|}\fi

In this section, we use the tools developed in \cref{sec:security} to compute the key rate of the BB84 protocol with single photons in the presence of a bypass channel. We consider the protocol with active basis choice and polarization encoding. 

\subsection{Single photon bypass model} \label{sec:spbp}
For the single photon implementation, we consider the projector
\begin{equation}
    \Omega^{\text{sp}} = \Omega_{0} + \Omega_{1} = \Pi_{B}^{0} \otimes \Pi_{F}^{0} + \Pi_{B}^{0} \otimes \Pi_{F}^{1} + \Pi_{B}^{1} \otimes \Pi_{F}^{0}.
\end{equation}
The resulting Fock space on $BF$, $\mathcal{H}_{BF}^{\mathrm{sp}} = \Omega^{\text{sp}}[\mathcal{H}_{BF}]$, is given by $\mathrm{Sp} [\mathcal{B}_{1}]$, where
\begin{equation}
    \mathcal{B}_{1} = \{ \ket{0}, \ket{1_{B_{H}}} , \ket{1_{B_{V}}}, \ket{1_{F_{H}}} , \ket{1_{F_{V}}} \} \label{eq:SPH}
\end{equation}
is the $\leq 1$ photon Fock basis for two spatial modes $B,F$, each with two polarization modes $H/V$. For simplicity, we assume throughout this section that the weight $W = 0$. Note that this cannot be guaranteed in practice, and it corresponds to assuming that all parties can send and receive at most one photon. This may set additional restriction on Eve. Consequently, the optimization problem that we solve here could offer optimistic results as compared to the cases when $W>0$.  We will consider relaxing this assumption in \cref{sec:WCP}. 

We consider an initial state given by $\ket{\psi'''} = \ket{\psi_{\text{sp}}}$ where
\begin{equation}
    \ket{\psi_{\mathrm{sp}}}_{ABF} := \sum_{a,x \in \{0,1\}} \sqrt{p_{x}/2} \ket{a,x}_{A} \otimes \ket{1_{q(x,a)}}_{BF}, \label{eq:spstate}
\end{equation}
where $p_{0} = p_{z}$ and $p_{1} = 1-p_{z}$, $p_{z}\in (0,1)$, are the probabilities of choosing HV and DA basis, respectively,
\begin{equation}
    \big(q(0,0),q(0,1),q(1,0),q(1,1)\big) = \big(B_{H},B_{V},B_{+},B_{-}\big),
\end{equation}
and $\ket{1_{B_{\pm}}} = (\ket{1_{B_{H}}} \pm \ket{1_{B_{V}}})/\sqrt{2}$ are the DA basis states. The first beam splitter of the interferometer is described by the following unitary on $BF$ 
\begin{equation}
    U_{1} = \begin{bmatrix}
        1 & 0 & 0 & 0 & 0\\
        0 & \sqrt{\eta_{AE}} & 0 & \sqrt{1-\eta_{AE}} & 0\\
        0 & 0 & \sqrt{\eta_{AE}} & 0 & \sqrt{1-\eta_{AE}}\\
        0 & \sqrt{1-\eta_{AE}} & 0 & -\sqrt{\eta_{AE}} & 0\\
        0 & 0 & \sqrt{1-\eta_{AE}} & 0 & -\sqrt{\eta_{AE}}\\
\end{bmatrix},
\end{equation}
in the basis $\mathcal{B}_{1}$. We then have $\ket{\psi''} = \ket{\psi''_{\mathrm{sp}}} := (\id_{A} \otimes U_{1})\ket{\psi_{\mathrm{sp}}}$, which specifies the source replacement constraints in \cref{eq:finalOpt}. 

Eve's action is described by a quantum channel from the single photon subspace $\mathcal{H}_{BF}^{\mathrm{sp}}$ to itself. Note that this implies a squashing assumption. Namely, Eve can only receive and send at most one photon. The action of the second beam splitter is then given by
\begin{equation}
    U_{2} = \begin{bmatrix}
        1 & 0 & 0 & 0 & 0\\
        0 & -\sqrt{\eta_{T}} & 0 & \sqrt{1-\eta_{T}} & 0\\
        0 & 0 & -\sqrt{\eta_{T}} & 0 & \sqrt{1-\eta_{T}}\\
        0 & \sqrt{1-\eta_{T}} & 0 & \sqrt{\eta_{T}} & 0\\
        0 & 0 & \sqrt{1-\eta_{T}} & 0 & \sqrt{\eta_{T}}\\
\end{bmatrix}, \label{eq:bs2}
\end{equation}
in the basis $\mathcal{B}_{1}$. Bob measures a five outcome POVM given by the elements
\begin{equation}
    \begin{gathered}
        N_{0,0} = p_{0}\ketbra{1_{B_{H}}}{1_{B_{H}}}, \
        N_{1,0} = p_{0}\ketbra{1_{B_{V}}}{1_{B_{V}}}, \\
        N_{0,1} = p_{1}\ketbra{1_{B_{+}}}{1_{B_{+}}}, \
        N_{1,1} = p_{1}\ketbra{1_{B_{-}}}{1_{B_{-}}}, \\
        N_{\perp,\perp} = \id_{BF}^{\text{sp}} - \sum_{b,y \in \{0,1\}}N_{b,y},
    \end{gathered} \label{eq:spPOVM}
\end{equation}
allowing us to define the rotated measurements $N_{b,y}' = U_{2}^{\dagger} N_{b,y} U_{2}$. Note that the above measurement does not require photon number resolving (PNR) detectors. Rather, the single photon assumption implies Bob can only detect at most one photon, hence the measurement description of threshold and PNR detectors coincide. For simplicity, we have not included dark counts in the POVM definition above.

To constrain the problem, we consider the quantum bit error rate (QBER), which we assume is the same in each basis, denoted by $\text{E}$, and the total single click probability, $\text{Q}$. The corresponding operators on $\mathcal{H}_{A} \otimes \mathcal{H}_{BF}^{\text{sp}}$ are given by
\begin{equation}
    \begin{aligned}
        E_{Z}' &= \ketbra{0,0}{0,0} \otimes N_{1,0}' + \ketbra{1,0}{1,0} \otimes N_{0,0}', \\
        E_{X}' &= \ketbra{0,1}{0,1} \otimes N_{1,1}' + \ketbra{1,1}{1,1} \otimes N_{0,1}', \\
        E_{\varnothing}' &= \id_{A} \otimes N_{\perp,\perp}'.
    \end{aligned}    \label{eq:rotMeas}
\end{equation}
We then consider the following constraints on any feasible $\rho \in \mathcal{D}[\mathcal{H}_{A} \otimes \mathcal{H}_{BF}^{\mathrm{sp}}]$:
\begin{equation}
    \begin{aligned}
        \tr[E_{Z}'\rho] &= p_{0}^{2}\text{Q} \cdot \text{E}, \\
        \tr[E_{X}'\rho] &= p_{1}^{2}\text{Q} \cdot \text{E}, \\
         \tr[E_{\varnothing}'\rho] &= 1-\text{Q}.
    \end{aligned}
\end{equation}
The error correction term is given by
\begin{equation}
    \text{Pr}[\text{pass}] H(\mathsf{A}|\mathsf{BXY}) = (1-2p_{z}(1-p_{z}))\mathrm{Q}H_{\text{bin}}(\text{E}),  \label{eq:EC1}
\end{equation}
where we have assumed for simplicity perfect error correction efficiency, $H_{\mathrm{bin}}$ is the Shannon binary entropy function and the pre-factor $(1-2p_{z}(1-p_{z}))$ is the sifting efficiency. The values of $\text{Q}$ and $\text{E}$ are determined by the experimental setup, and for simulation purposes we fix them according to a noise model. Specifically, we calculate values according to the formulas from~\cite[Appendix A]{Panayi_2014}:
\begin{equation}
\begin{aligned}
    \text{Q} &= 1 - (1-\eta_{\mathrm{ch}}\eta_{\mathrm{d}})(1-p_{d})^{2}, \ \  \text{and} \\
    \text{E} &= (e_{0}\text{Q} - (e_{0} - e_{d})\eta_{\mathrm{ch}}\eta_{\mathrm{d}} p_{d})/\text{Q},
\end{aligned} 
\end{equation}
where $e_{0} = 1/2$. The numerical values used are detailed in~\cref{tab:num}.

\begin{table}[h!]
\centering
\begin{tabular}{|c | c | c|} 
 \hline
 Symbol & Description & Value \\ [0.5ex] 
 \hline\hline
 $\eta_{\mathrm{ch}}$ & Channel transmissivity & 0.001 \\ \hline
 $\eta_{\mathrm{d}}$ & Detector efficiency & 0.9 \\\hline
 $e_{d}$ & \makecell{Misalignment probability} & 0.01 \\\hline
 $p_{d}$ & Dark count probability & \makecell{$10^{-7}$ \\ per pulse} \\\hline 
 $p_{z}$ & \makecell{Probability of choosing \\ $Z$ basis measurement} & 0.5\\
 [1ex] 
 \hline
\end{tabular} 
\caption{Simulation parameters for the single photon BB84 example with matched detectors.}
\label{tab:num}
\end{table}

For comparison, we consider single photon BB84 without a bypass channel. Here, a lower bound on the rate is given by~\cite{PIRANDOLA_2008}:
\begin{equation}
    r^{\infty} \geq (1-2p_{z}(1-p_{z}))\text{Q}\big[ 1 - 2H_{\mathrm{bin}}(\text{E}) \big]. \label{eq:SPnorm}
\end{equation}
Additionally, we consider the bound from Ref.~\cite{ghalaii2023satellitebased} that accounts for the bypass model, 
\begin{multline}
    r^{\infty} \geq (1-2p_{z}(1-p_{z}))\text{Q}\Big[ -H_{\mathrm{bin}}(\text{E}) \\ + \frac{S_{11}^{\mathrm{L}}}{\text{Q}}\big(1-H_{\mathrm{bin}}(\varepsilon_{11}^{\mathrm{U}})\big) + \frac{S_{0}^{\mathrm{L}}}{\text{Q}}\Big], \label{eq:extBP-SP}
\end{multline}
where $S_{0}^{\mathrm{L}} = \max\{\text{Q}-\eta_{AE},0\}$, $S_{11}^{\mathrm{L}} = \max\{\text{Q}-(1-\eta_{AE}),0\}$ and $\varepsilon_{11}^{\mathrm{U}} = \min\{\text{E} \cdot \text{Q}/S_{11}^{\mathrm{L}},1/2\}$. We refer the reader to Ref.~\cite{ghalaii2023satellitebased} for the physical meanings of these quantities. Moreover, as discussed Ref.~\cite{ghalaii2023satellitebased}, the bound in \cref{eq:SPnorm} also serves as a valid bound in the presence of bypass channels, and we take the maximum of the two for our comparison with the numerics. 

Importantly, the observed statistics are independent of the bypass parameters $\eta_{AE}$ and $\eta_{T}$. Rather, we fix the observations $\text{Q}$ and $\text{E}$, and consider the set of $(\eta_{AE},\eta_{T})$ for which the optimization in \cref{eq:finalOpt} is feasible. This corresponds to the set of bypass models compatible with observations. Motivated by the fact that Alice and Bob may, through some means, bound $\eta_{AE}$~\cite{ghalaii2023satellitebased}, but have no information about $\eta_{T}$, our analysis will focus on fixing $\eta_{AE}$ and minimizing the rate over feasible $\eta_{T}$.  

In \cref{fig:SP1} we plot our numerical bound against the existing lower bound for bypass channels, and the bound corresponding to single photon BB84 without a bypass model. Similar to Ref.~\cite{ghalaii2023satellitebased}, here, these two coincide. At a given $\eta_{AE}$, we found that feasible values of $\eta_{T}$ lie roughly in the interval $[1-\eta_{AE},1]$, and observe a rate consistent with \cref{eq:SPnorm} over the entire interval. We therefore set $\eta_{T} \approx 1$ for the figure. 

As aforementioned, Ref.~\cite{ghalaii2023satellitebased} derived two analytical bounds for the single photon case. The first is tailored to the bypass situation, and the other is given by a standard formula. The latter was shown to perform better, and whether this was optimal was left open. Our results in \cref{fig:SP1} suggests this to be the case, as we could find no advantage from considering bypass channels in standard single photon BB84. 

Furthermore, the analytical bound from Ref.~\cite{ghalaii2023satellitebased} does not make a single photon assumption on the system sent from Eve to Bob. As described at the beginning of this subsection, our analysis does make this assumption; this was a consequence of choosing the weight $W = 0$. While because of the limitations of our numerical approach, we cannot rule out the possibility of finding a lower key rate when $W>0$, in the case $W=0$, we observe that the explicit attack $\rho^*$ obtained in Step 1 is nearly optimal\footnote{By optimal, we mean that the key rate at Step 1 matches that of Step 2.}, and no significant improvements are found by running a more intensive numerical calculation. Consequently, we expect that our numerical bounds are almost optimal, and the fact that they closely agree with the lower bound of Ref.~\cite{ghalaii2023satellitebased} provides evidence of tightness. 

\begin{figure}[h]
\includegraphics[width=8.4cm]{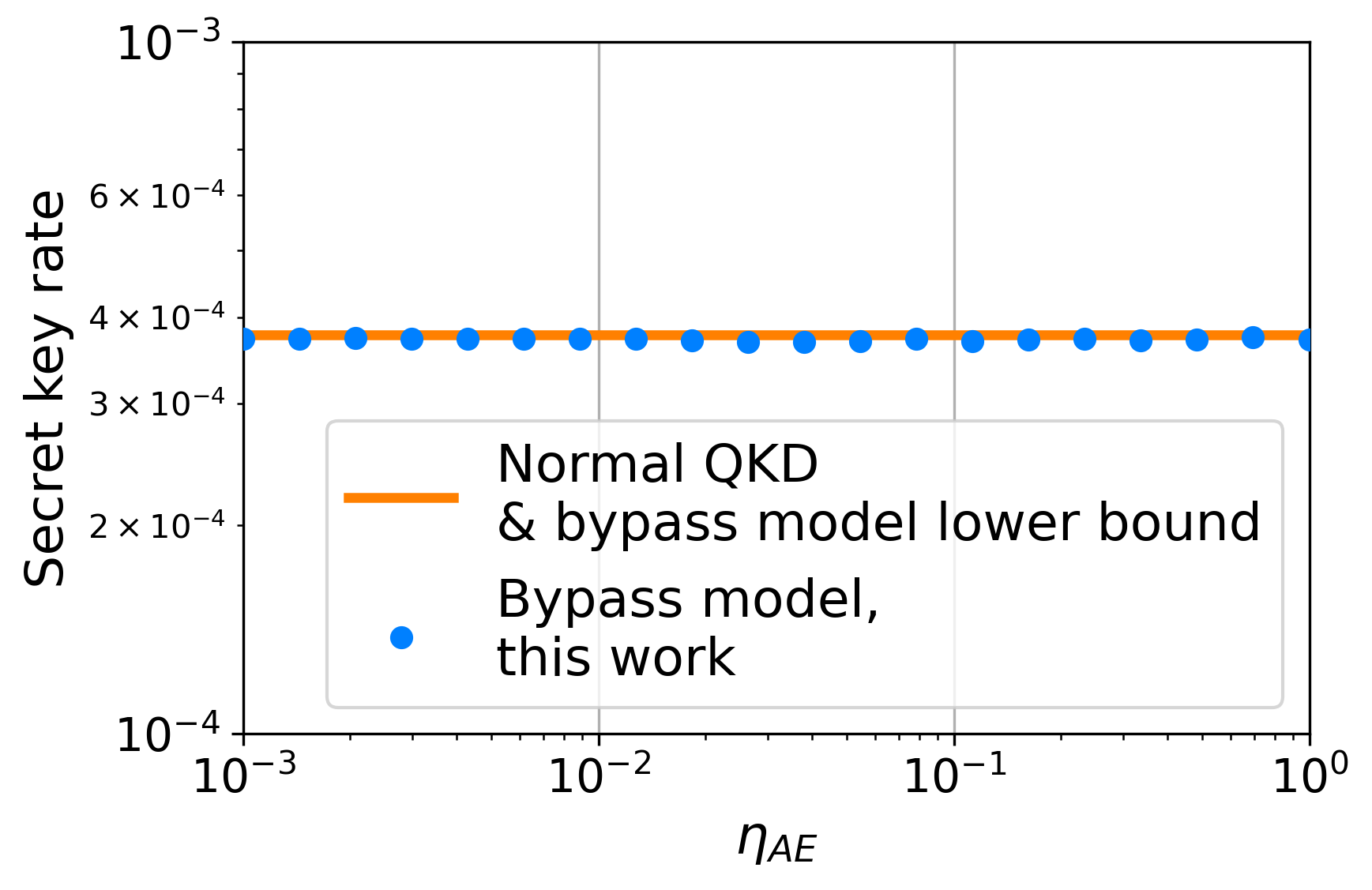}
\caption{Secret key rate of single photon BB84 in the presence of bypass channels. $\eta_{AE}$ denotes the fraction of signal sent from Alice and received by Eve, which can be bounded via monitoring techniques. ``Normal QKD'' indicates the rate in the absence of bypass channels, and ``bypass model lower bound'' refers to that derived in~\cite{ghalaii2023satellitebased}; here both coincide. ``Bypass model this work'' denotes the results of our numerical calculation; we observe an invariant rate as we vary the bypass parameter $\eta_{T}$, which we hence set to $1$ in this plot. Simulation parameters are given in \cref{tab:num}}
\label{fig:SP1}
\end{figure}

\subsection{Single photon with mismatched detector efficiencies}
One straightforward adaptation to the above model is the case where Bob's detectors have unmatched efficiencies $\eta_{1}$ and $\eta_{2}$. Such a scenario arises practically~\cite{Fung2009,Trushechkin2022, Zhang_2021,Winick_2018}, and the single photon case has been studied both numerically and analytically~\cite{Winick_2018,Bochkov_19}. Here, we extend this analysis to the bypass scenario.  

To incorporate mismatched detector efficiencies, we need to modify Bob's POVM elements:
\begin{equation}
    \begin{aligned}
        N_{0,0} \mapsto \eta_{1} N_{0,0}, \ N_{1,0}\mapsto \eta_{2} N_{1,0}, \\
        N_{0,1} \mapsto \eta_{1} N_{0,1}, \ N_{1,1}\mapsto \eta_{2} N_{1,1},
    \end{aligned}
\end{equation}
with $N_{\perp,\perp} = \id_{BF}^{\text{sp}} - \sum_{b,y \in \{0,1\}}N_{b,y}$ and $N_{b,y}' = U^{\dagger}_{2}N_{b,y}U_{2}$ as before. 

We also align our noise model with that detailed in~\cite[Appendix F.1]{Winick_2018}. Let $\mathcal{K}:\mathcal{D}[\mathcal{H}_{BF}^{\mathrm{sp}}] \to \mathcal{D}[\mathcal{H}_{BF}^{\mathrm{sp}}]$ define the depolarizing channel, 
\begin{equation}
    \mathcal{K}(\sigma) = (1-q)\sigma + q(\ketbra{1_{B_{H}}}{1_{B_{H}}} + \ketbra{1_{B_{V}}}{1_{B_{V}}})/2 \label{eq:depol}
\end{equation}
for $q \in [0,1]$. We then consider the noisy state
\begin{equation}
    \rho^{\mathrm{sim}}_{\mathrm{sp}} = (\mathcal{I}_{A} \otimes \mathcal{K})[\ketbra{\psi_{\mathrm{sp}}}{\psi_{\mathrm{sp}}}],
\end{equation}
and compute the statistics 
\begin{equation}
\begin{aligned}
    \text{E}_{Z} &:= \tr[E_{Z}\rho^{\mathrm{sim}}_{\mathrm{sp}}] \\
    \text{E}_{X} &:= \tr[E_{X}\rho^{\mathrm{sim}}_{\mathrm{sp}}], \\
    \text{Q} &:= \tr[E_{\varnothing}\rho^{\mathrm{sim}}_{\mathrm{sp}}],
\end{aligned}
\end{equation}
where $E_{Z},E_{X},E_{\varnothing}$ are defined as in \cref{eq:rotMeas} with $N_{b,y}'$ replaced with the un-rotated measurements $N_{b,y}$. Note that the error correction term is given analogously to \cref{eq:EC1}, by replacing $\text{E}$ with $E_Z$. The constraints on any feasible $\rho \in \mathcal{D}[\mathcal{H}_{ABF}^{\mathrm{sp}}]$ are then given by $\tr[E_{w}'\rho] = \text{E}_{w}$ for $w \in \{Z,X\}$ and $\tr[E_{\varnothing}'\rho] = \text{Q}$.

Note, as before, we fix these statistics independently of $\eta_{AE}$ and $\eta_{T}$ so they only depend on the noise model, specified by $q$, $\eta_{1}$ and $\eta_{2}$. For our numerics we choose the case of $q = 0.1$ and $\eta_{2} = 1$, and examine the bypass behaviour across the range $\eta_{1}\in(0,1]$, corresponding to an efficiency mismatch. 

\begin{remark}
    We will consider the case where, for a given $\eta_{AE}$, we do not minimize the key rate over all feasible $\eta_{T}$, and instead choose $\eta_{T} = \eta_{AE}$. We refer to these results as ``heuristic upper bounds'' on the rate at a given $\eta_{AE}$, which reflects the fact that they are not rigorous. This is because our numerical approach provides a lower bound on the key rate for a fixed pair $(\eta_{AE},\eta_{T})$, which is not guaranteed to be tight. However, we observe in practice that our numerical bounds are in fact almost tight\footnote{This can be inferred when the value of the objective function for the near optimal guess, $f(\rho^*)$, obtained in Step 1, roughly matches the solution to the linearized SDP in Step 2, as described in \cref{sec:lowbnds} and Ref.~\cite{Winick_2018}.}, and we can use these results to gain an insight into the possible improvements arising from the bypass model. \label{rem:ub}
\end{remark}
We divide the numerical results into two pairs of figures which are summarized below.
\begin{itemize}
    \item In \cref{fig:SPDet1}, we consider the case $\eta_{AE} = \eta_{T}$, corresponding to heuristic upper bounds on the key rate. We find that for various detector efficiency mismatches, a larger restriction on Eve (corresponding to a smaller value of $\eta_{AE}$) results in a larger key rate. This improvement becomes more pronounced as the mismatch increases. 
    \item In \cref{fig:SPDet2}, we minimize the rate over $\eta_{T}$ for each $\eta_{AE}$ and a fixed detector efficiency mismatch. We find that the key rate increases as the restriction on Eve increases. We thus conclude that improvements can be found from the bypass model when there is a large detector efficiency mismatch, contrasting the matched case explored in \cref{sec:SP}.   
\end{itemize}

\subsubsection{Heuristic upper bounds}

When $\eta_{T} = \eta_{AE}$, we find the existence of a bypass channel for a broad range of values for $\eta_{AE}$. The resulting key rate is plotted in \cref{fig:SPDet1}. Contrasting the case of matched efficiencies, the plot shows nontrivial bypass behaviour. At large mismatched efficiencies, we observe a vanishing key rate for normal QKD ($\eta_{AE}=\eta_{T}=1$), yet the heuristic upper bound when $\eta_{AE}<1$ is non-zero, and increasing in $1-\eta_{AE}$, suggesting room for improvement. We also find the possibility of reaching higher values of depolarizing noise at lower $\eta_{AE}$ in \cref{fig:SPDet1}(b).    
\begin{figure}[h]
   \centering \includegraphics[width=0.9\linewidth]{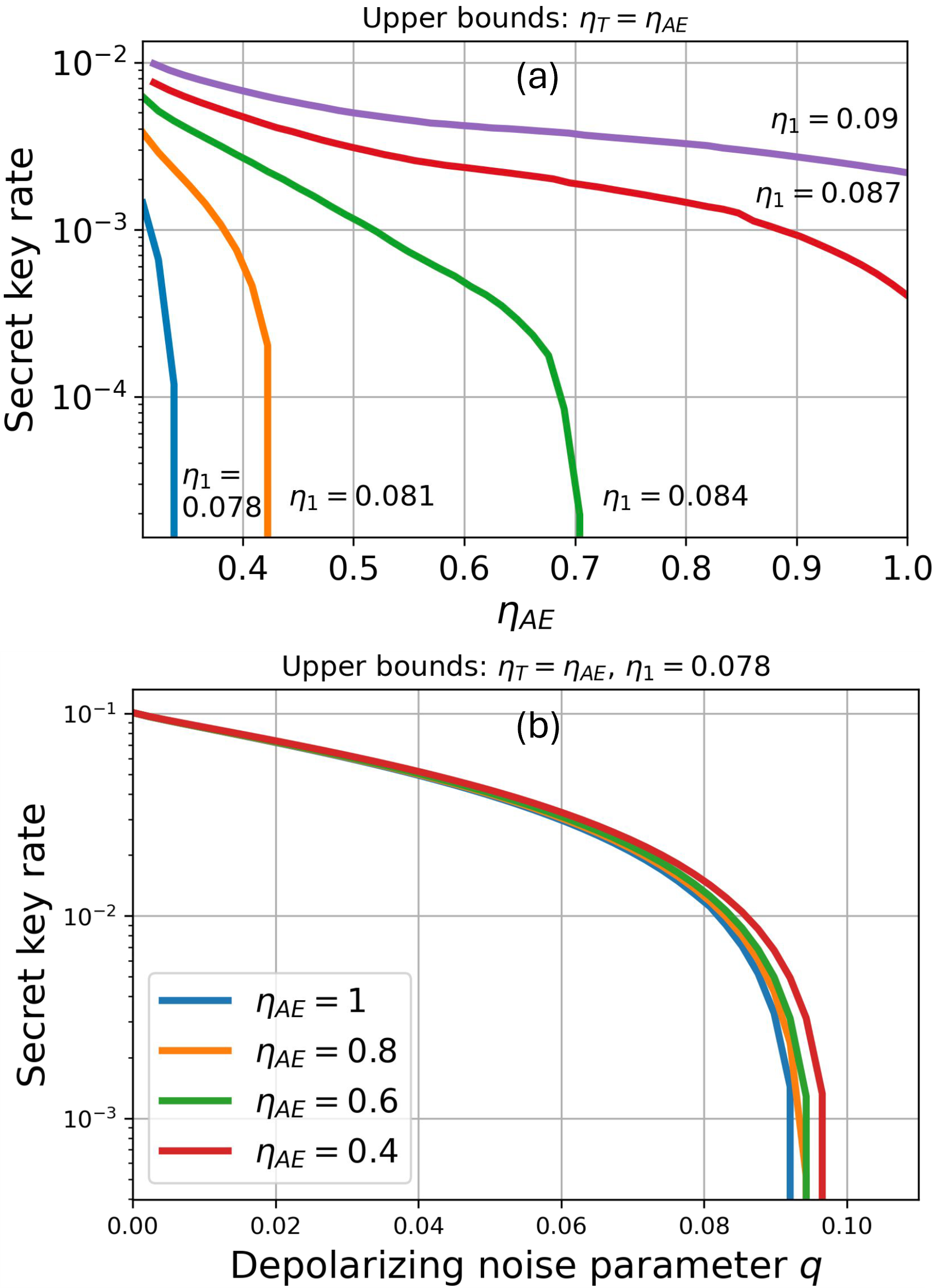}
\caption{Secret key rate of single photon BB84 with detector efficiency mismatch in the presence of bypass channels. $\eta_{1}$ is the efficiency of Bob's first detector, while the other has unity efficiency. We set the bypass parameter $\eta_{T}=\eta_{AE}$; hence the curves serve as heuristic upper bounds on the key rate. In (a), we set $q=0.1$ in \cref{eq:depol} to calculate the observed statistics, and vary $\eta_{AE}$ at different detector efficiencies. In (b), we take different values of $\eta_{AE}$, and vary the noise $q$ at a single detector efficiency.} 
\label{fig:SPDet1}
\end{figure}

In \cref{fig:SPDet1}(b), all the rates converge to the same rate as the depolarizing noise $q$ tends to 0. Moreover, we observe that when $q = 0$ and $\eta_{1} < 1$, any feasible pair $(\eta_{AE},\eta_{T})$ results in the same rate, similar to the single photon behaviour without a mismatch discussed in \cref{sec:spbp}. In the honest implementation with $q = 0$, the case of mismatched detectors is equivalent to preparing a tilted entangled state\footnote{By tilted, we refer to an entangled state that has, for example, uneven weights on the $\ket{00}$ and $\ket{11}$ components.} as opposed to one that is maximally entangled~\cite{Fung2009}. In this case, the rate is dependent only on the amount of efficiency mismatch (which influences the amount of tilting), owing to the fact that any passive adversary can always guess the outcome associated to the higher efficiency. The rate in this case is given by 
\begin{equation}
    r(\eta_{1},\eta_{2}) = \frac{\eta_{1} + \eta_{2}}{2}(1-2p_{z}(1-p_{z}))H_{\mathrm{bin}}\Big( \frac{\max\{\eta_{1},\eta_{2}\}}{\eta_{1} + \eta_{2}}\Big),
\end{equation}
where the factor of $\frac{\eta_{1} + \eta_{2}}{2}$ is the probability that either detector clicks given a single photon is sent down the channel carrying a random binary encoding, and the argument $\frac{\max\{\eta_{1},\eta_{2}\}}{\eta_{1} + \eta_{2}}$ quantifies the detector efficiency mismatch. Indeed, for the case $\eta_{1} = 0.078, \  \eta_{2} = 1$ and $p_{z} = 1/2$ we find $r(\eta_{1},\eta_{2}) \approx 0.1$, as seen numerically in \cref{fig:SPDet1}(b) when $q = 0$.

\subsubsection{Lower bounds}

To obtain lower bounds, we minimize the rate over feasible $\eta_{T}$ at a given noise level and given $\eta_{AE}$, corresponding to the worst case scenario. These results are plotted in \cref{fig:SPDet2}. At each $(\eta_{1},\eta_{AE})$, we find a convex curve as we plot the rate versus $\eta_{T}$ (see \cref{fig:SPDet2}(b)), and the minimum is higher than that of normal QKD. Notably, lower values of $\eta_{AE}$ allow for a nonzero key rate where the rate is zero for normal QKD. That is, bypass channels allow us to reach larger mismatches in efficiency than what is possible in normal QKD. 

\begin{figure}[h]
\centering
    \includegraphics[width=0.9\linewidth]{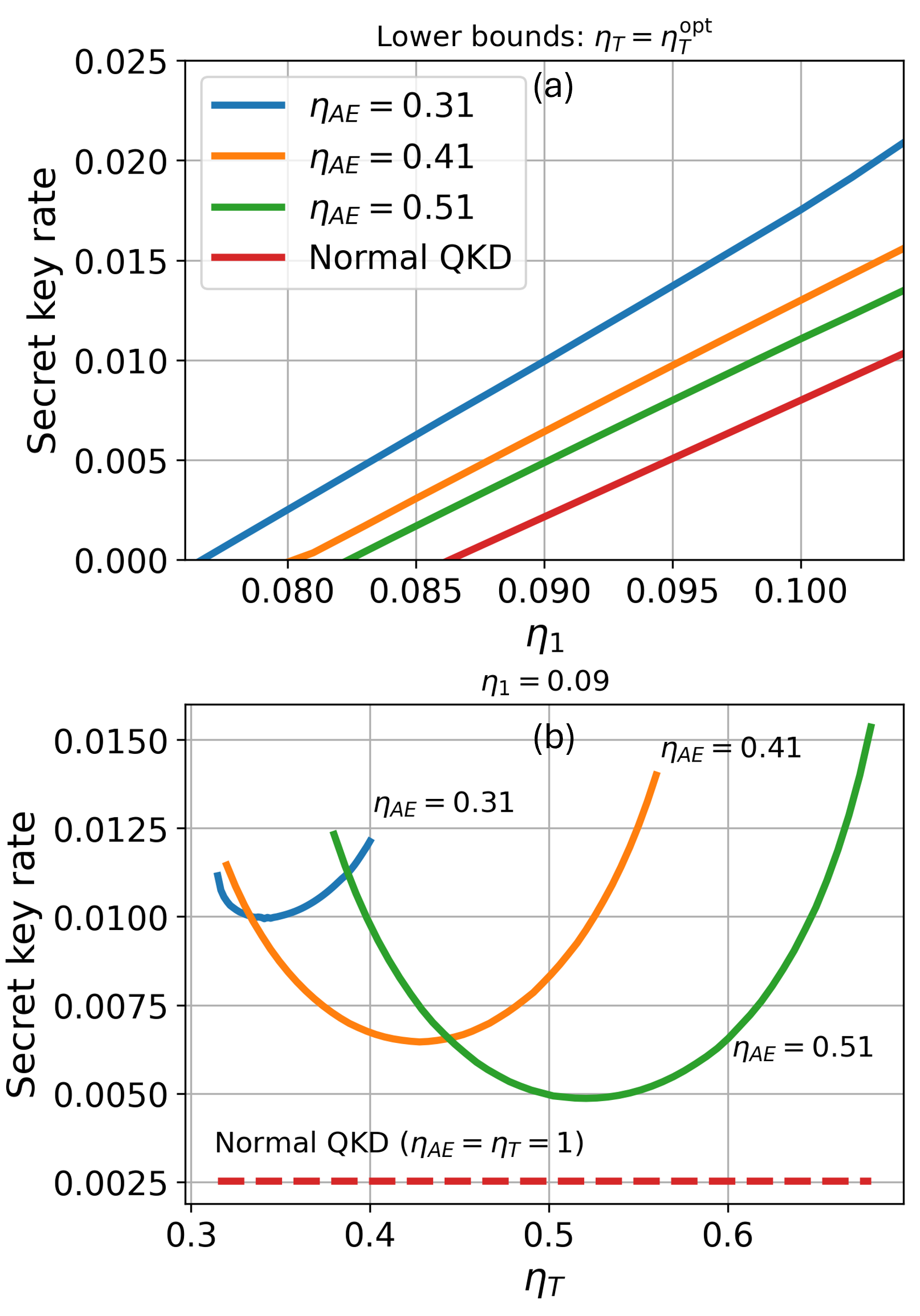}  
    \caption{Secret key rate of single photon BB84 with detector efficiency mismatch, in the presence of bypass channels. $\eta_{1}$ is the efficiency of Bob's first detector. The efficiency of Bob's second detector is fixed at $\eta_{2}=1$. In (a) for any fixed value of $\eta_{AE}$ and $eta_{1}$ the rate is minimized over feasible values of $\eta_{T}$. In (b), we show the nature of this minimization at $\eta_{1}=0.09$. The rates are compared to a normal QKD key rate derived from our numerical framework when $\eta_{AE} = \eta_{T} = 1$, highlighting improvements from the bypass channel. The noise parameter is set at $q=0.1$ for both plots.}
\label{fig:SPDet2}
\end{figure}  

\ifarxiv\section{Application to BB84 with weak coherent pulses}\label{sec:WCP}\else\noindent{\it Application to BB84 with weak coherent pulses.|}\fi 

Performing BB84 with single photons is challenging in practice, and a popular alternative is to use phase randomized weak coherent pulses (WCP). Adapting security proofs to account for this is therefore essential. In this section, we apply the tools developed in \cref{sec:security} to study WCP-BB84 in the presence of bypass channels. We present some initial findings, and motivate future research on the problem.
\subsection{WCP bypass model}
We begin by denoting the full four mode Fock space for systems $BF$ as $\mathcal{H}_{B} \otimes \mathcal{H}_{F} = \mathcal{H}_{BF}^{\mathrm{wcp}}= \mathrm{Span}[\mathcal{B_{\infty}}]$, where
\begin{equation}
    \mathcal{B}_{\infty} = \big\{ \ket{n_{B_{H}},m_{B_{V}},k_{F_{H}},l_{F_{V}}}\big\}_{n,m,k,l\in\{0,1,...\}} 
\end{equation}
is the basis for a Fock space with two spatial modes $B$ and $F$, each with two polarization modes $H$ and $V$. We label the field operators for the first and second spatial mode as $\hat{b}_{H/V}$ and $\hat{f}_{H/V}$ respectively, and define $\hat{b}_{\pm} = (\hat{b}_{H} \pm \hat{b}_{V})/\sqrt{2}$ and similarly $\hat{f}_{\pm}$. A coherent state with parameter $\alpha_{q(a,x)} \in \mathbb{C}$ is denoted by
\begin{equation}
    \ket{\alpha_{q(a,x)}} = \sum_{n=0}^{\infty}\frac{(\alpha_{q(a,x)})^{n}}{\sqrt{n!}}\ket{n_{q(a,x)}},
\end{equation}
and the initial entangled state prepared by Alice is given by
\begin{equation}
    \ket{\psi_{\mathrm{wcp}}}_{ABF} := \sum_{a,x \in \{0,1\}}\sqrt{p_{x}/2}\ket{a,x}_{A}\ket{\alpha_{q(a,x)}}_{BF}, \label{eq:WCPstate}
\end{equation}
where $p_{x}$ and $q(a,x)$ are defined below \cref{eq:spstate}. Next, the state interacts with the first beam splitter, $U_{1}$, which performs the mapping $\hat{b}_{q} \mapsto \sqrt{\eta_{AE}}\, \hat{b}_{q} + \sqrt{1-\eta_{AE}}\, \hat{f}_{q}$, resulting in $\ket{\psi'_{\mathrm{wcp}}} := (\id\otimes U_{1})\ket{\psi_{\mathrm{wcp}}}$ where
\begin{multline}
    (\id\otimes U_{1})\ket{\psi_{\mathrm{wcp}}} = \sum_{a,x \in \{0,1\}}\sqrt{p_{x}}\ket{a,x}\\ \otimes \ket{\sqrt{\eta_{AE}}\alpha_{q(a,x)}} \otimes \ket{\sqrt{1-\eta_{AE}}\alpha_{q(a,x)}}.
\end{multline}
We also assume Alice uses a phase randomized source. Let $\phi \in \mathbb{R}$ and $\alpha_{q(a,x)} = \sqrt{\mu}e^{\I\phi}$ for all $a$ and $x$. The resulting phased randomized coherent state takes the form
\begin{equation} \label{eq:rhoprime}
    \begin{aligned}
    \rho'_{ABF} &:= \frac{1}{2\pi}\int_{0}^{2\pi}\mathrm{d}\phi \ \ketbra{\psi_{\mathrm{wcp}}'}{\psi_{\mathrm{wcp}}'} \\ 
    &= \sum_{a,a',x,x' \in \{0,1\}}\sqrt{p_{x}p_{x'}} \ketbra{a,x}{a',x'} \\
    & \hspace{2cm} \otimes \sum_{n=0}^{\infty} P_{\mu}(n) \ketbra{\varphi_{n,a,x}}{\varphi_{n,a',x'}},
    \end{aligned}
\end{equation}
where $P_{\mu}(n) = e^{-\mu}\mu^{n}/n!$ is a Poisson distribution with mean $\mu$,  
\begin{equation}
    \ket{\varphi_{n,a,x}} := \frac{\big( \sqrt{\eta_{AE}}\hat{b}^{\dagger}_{s(a,x)} + \sqrt{1-\eta_{AE}}\hat{f}^{\dagger}_{s(a,x)}\big)^{n} }{\sqrt{n!}}\ket{0},
\end{equation}
and $\big( s(0,0),s(1,0),s(0,1),s(1,1) \big) = (H,V,+,-)$. 

After Eve's action and the second beam splitter $U_{2}$, Bob measures a POVM on mode $B$, $\{N_{b,y}\}$ (see \cref{app:POVM} for details), and we define the rotated measurements $N'_{b,y} = U_{2}^{\dagger}N_{b,y}U_{2}$. Alice's POVM elements are given by $M_{a,x} = \ketbra{a,x}{a,x}$ as before. 

Next, we specify the set of observables and their expected values. Consider the following depolarizing channel acting on the full Fock space $B$, $\mathcal{K}' : \mathcal{D}[\mathcal{H}_{B}] \to \mathcal{D}[\mathcal{H}_{B}]$, where 
\begin{multline}
    \mathcal{K}'(\sigma) = (1-q)\sigma + q\big(\ketbra{1_{B_{H}},0_{B_{V}}}{1_{B_{H}},0_{B_{V}}} \\ + \ketbra{0_{B_{H}},1_{B_{V}}}{0_{B_{H}},1_{B_{V}}}\big)/2
\end{multline} 
and $q \in [0,1]$. Note that $\mathcal{K}'$ mixes the state with a maximally mixed state on the single photon subspace. We define
\begin{equation}
    \rho^{\mathrm{sim}}_{\mathrm{wcp}} := (\mathcal{I}_{A} \otimes \mathcal{K}')[\sigma_{\mathrm{wcp}}],
\end{equation}
where the phase randomized state in the absence of a bypass channel is given by\footnote{We slightly abuse notation here, viewing $\ket{\psi_{\text{wcp}}}$, as defined in \cref{eq:WCPstate}, as a state on $AB$ rather than $ABF$, since $\ket{\psi_{\text{wcp}}}$ has zero weight in $F$. Then $\sigma_{\text{wcp}}$ is a state on $AB$ only.}
\begin{equation}
    \sigma_{\mathrm{wcp}} := \frac{1}{2\pi}\int_{0}^{2\pi} \mathrm{d}\phi \, \ketbra{\psi_{\mathrm{wcp}}}{\psi_{\mathrm{wcp}}}_{AB},    
\end{equation}
and compute the statistics
\begin{equation}
    p^{\mathrm{wcp}}(a,b,x,y) := \tr[(M_{a,x} \otimes N_{b,y})\rho^{\mathrm{sim}}_{\mathrm{wcp}}]. 
\end{equation}
We consider the full set of measurement statistics, so any feasible $\rho \in \mathcal{D}[\mathcal{H}_{ABF}^{\mathrm{wcp}}]$ must satisfy $\tr[(M_{a,x}\otimes N_{b,y}')\rho] = p^{\mathrm{wcp}}(a,b,x,y)$. Note the statistics $p^{\mathrm{wcp}}(a,b,x,y)$ are fixed by the values $q, \, \mu$ and $p_{z}$, and are independent of the bypass parameters. They only serve as a fixed set of observations, which allow us to study the behaviour of the bypass channel in compatible regimes.  

\subsection{Applying dimension reduction}
We consider the following projector for the dimension reduction technique:
\begin{multline}
    \Omega^{\text{wcp}} = \Omega_{0} + \Omega_{1} + \Omega_{2} \\ = \Pi_{B}^{0} \otimes \Pi_{F}^{0} + \Pi_{B}^{0} \otimes \Pi_{F}^{1} +  \Pi_{B}^{1} \otimes \Pi_{F}^{0} \\ + \Pi_{B}^{0} \otimes \Pi_{F}^{2} + \Pi_{B}^{2} \otimes \Pi_{F}^{0} + \Pi_{B}^{1} \otimes \Pi_{F}^{1}.  
\end{multline}
This corresponds to the subspace of less than or equal to 2 photons shared between $B$ and $F$. The resulting space $\Omega^{\text{wcp}}[\mathcal{H}_{BF}^{\text{wcp}}]$ is spanned by 15 vectors, and for our numerics we construct the unitaries $U_{1}, \ U_{2}$ and measurements $N_{b,y}$ in this subspace (recall these operators are block diagonal), allowing us to specify the finite dimensional optimization in \cref{eq:finalOptDR}. We leave the weight $W$ of the state outside $\Omega^{\text{wcp}}[\mathcal{H}_{BF}^{\text{wcp}}]$ as a parameter to choose or bound later (see Appendix \ref{app:weight}).

\subsection{Results}

We set our noise parameter $q=0.02$ and $p_{z} = 0.5$. For a given $\mu$, Bob's detection probability and error rate in each basis are labeled $\text{Q}_{\mu}$, $\text{E}_{\mu}^{Z}$ and $\text{E}_{\mu}^{X}$, respectively. These are calculated from the simulated data:
\begin{align}
    \text{Q}_{\mu} &= 1 - \tr[(\id_{A} \otimes N_{\perp,\perp})\rho_{\mathrm{wcp}}^{\mathrm{sim}}] \\ 
    p_{z}^{2}\text{Q}_{\mu}\text{E}_{\mu}^{Z} &= \sum_{a \neq b}\tr[(M_{a,0} \otimes N_{b,0})\rho_{\mathrm{wcp}}^{\mathrm{sim}}] \\
    (1-p_{z})^{2}\text{Q}_{\mu}\text{E}_{\mu}^{X} &= \sum_{a \neq b}\tr[(M_{a,1} \otimes N_{b,1})\rho_{\mathrm{wcp}}^{\mathrm{sim}}].
\end{align}
In the model considered here, $\text{Q}_{\mu} = 1-(1-q)e^{-\mu}$ and $\text{E}_{\mu}^{Z} = \text{E}_{\mu}^{X}$. The error correction term is given by
\begin{equation}
    \text{Pr}[\text{pass}] H(\mathsf{A}|\mathsf{BXY}) = (1-2p_{z}(1-p_{z}))\mathrm{Q}_{\mu}H_{\text{bin}}(\text{E}_{\mu}).  
\end{equation}
Using the values $\text{Q}_{\mu}$ and $\text{E}_{\mu}$, we can compare our results with the lower bound of~\cite{ghalaii2023satellitebased}:
\begin{multline}
    r^{\infty} \geq (1-2p_{z}(1-p_{z}))\text{Q}_{\mu}\Big[ -H_{\mathrm{bin}}(\text{E}_{\mu}) \\ + \frac{S_{11}^{\mathrm{L}}}{\text{Q}_{\mu}}\big(1-H_{\mathrm{bin}}(\varepsilon_{11}^{\mathrm{U}})\big) + \frac{S_{0}^{\mathrm{L}}}{\text{Q}_{\mu}}\Big], \label{eq:extBP}
\end{multline}
where $S_{0}^{\mathrm{L}} = \max\{\text{Q}_{\mu}-(1-e^{-\mu\eta_{AE}}),0\}$, $S_{11}^{\mathrm{L}} = \max\{\text{Q}_{\mu}-(1-\mu \eta_{AE}e^{-\mu}),0\}$ and $\varepsilon_{11}^{\mathrm{U}} = \min\{\text{E}_{\mu}\text{Q}_{\mu}/S_{11}^{\mathrm{L}},1/2\}$. For our comparisons, we will explore regions where $\mu \in [0.5,1.1]$ and $\eta_{AE} \in [0.85,1]$, which significantly differs to that considered in~\cite{ghalaii2023satellitebased}, where $\mu \approx O(10^{3})$ and $\eta_{AE} \approx O(10^{-3})$. 

The results are plotted in \cref{fig:WCP1_a,fig:WCP1_b,fig:WCP_2}, and are summarized below before elaborated on further.
\begin{itemize}
    \item In \cref{fig:WCP1_a}, we consider the case $\eta_{T} = 1$ and $W = 0$, corresponding to heuristic upper bounds on the secret key rate (see \cref{rem:ub}). We find that for various values of mean photon number, the key rate increases as the restriction on Eve increases (when $\eta_{AE}$ takes smaller values). 
    \item In \cref{fig:WCP1_b}, we minimize the rate over $\eta_{T}$ at a fixed mean photon number for both $W = 0$ and $W > 0$. We also find that the key rate increases as the restriction on Eve increases.
    \item In \cref{fig:WCP_2}, we display the values of $\eta_{T}$ that achieved the minimum in \cref{fig:WCP1_b}, showing an increase with $\eta_{AE}$. 
\end{itemize}

\subsubsection{Heuristic upper bounds}

In \cref{fig:WCP1_a}, we consider heuristic upper bounds on the key rate as a function of $\eta_{AE}$ when $W=0$, which corresponds to a photon squashing assumption: Eve can only interact with and send at most two photons. We then set $\eta_{T} = 1$ (as opposed to minimizing over $\eta_{T}$), and plot the rate for different values of $\mu$. The figure shows the bypass channel having a significant impact on the key rate, suggesting that improved key rates are possible in this regime. Note the interesting relationship between $\mu$ and $\eta_{AE}$. It seems that the optimal $\mu$ increases as $\eta_{AE}$ decreases, which aligns with the observations found in~\cite{ghalaii2023satellitebased} for the regime $\mu \approx O(10^{3})$ and $ \eta_{AE} \approx O(10^{-3})$. As the restriction on Eve increases, it becomes advantageous for Alice to use a higher mean photon number, since she can increase the probability of Bob's detector clicking without becoming vulnerable to photon number splitting attacks. Our results suggest this intuition still holds in the less extreme, more practical regime\footnote{By more practical, we mean in the sense of a smaller restriction on Eve (larger $\eta_{AE}$), which is likely to be achievable by existing monitoring techniques such as LIDAR~\cite{ghalaii2023satellitebased}.} of $\mu \approx O(10^{-1})$, $\eta_{AE} \approx O(10^{-1})$. Note however, our results show this trend for the heuristic upper bound in \cref{fig:WCP1_a}, and further investigation is needed to obtain the same conclusions for lower bounds.

\begin{figure}[h]
    \includegraphics[width=7.5cm]{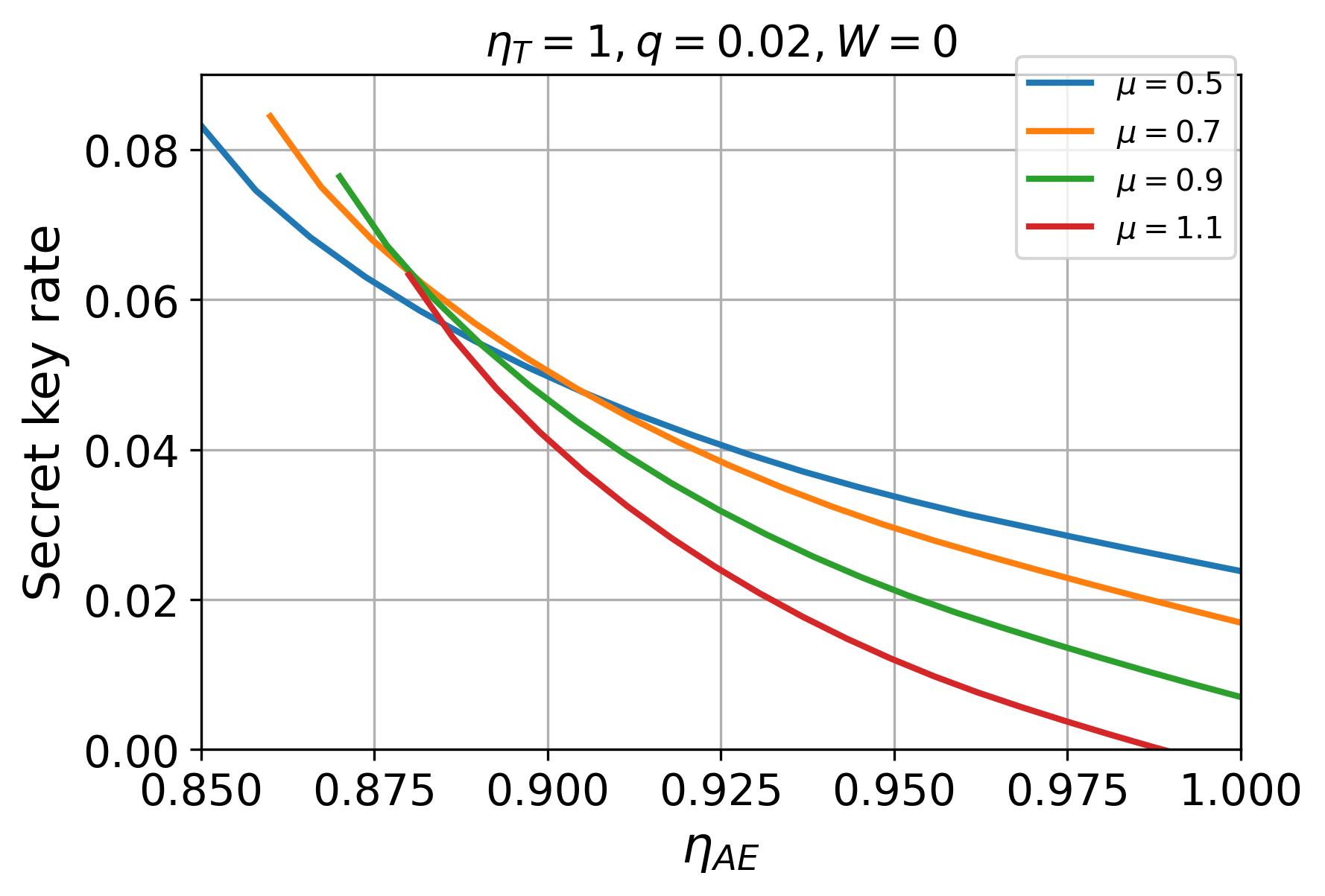}
    \caption{Heuristic upper bounds on the secret key rate of BB84 with phase-randomized weak coherent pulses in the presence of bypass channels. Here, $q$ governs the amount of noise in the system, $\mu$ is the mean photon number and the parameter $W=0$ implies Eve is limited to sending and receiving at most two photons. To obtain the heuristic upper bounds, we set the bypass parameter $\eta_{T} = 1$.}
    \label{fig:WCP1_a}
\end{figure}

\subsubsection{Lower bounds}

Improved key rates from restricting Eve in the bypass model are further supported by \cref{fig:WCP1_b}, where for each value of $\eta_{AE}$, we take the worst case key rate over $\eta_{T}$. Note that this should be done in practice, since unlike $\eta_{AE}$, $\eta_{T}$ cannot be directly measured in the experimental setup~\cite{ghalaii2023satellitebased}. We find that even in this worse case, the key rate improves for smaller values of $\eta_{AE}$, arising from larger restrictions on Eve (cf. blue versus dashed orange line in \cref{fig:WCP1_b}). This improvement holds both for the case of $W=0$ and $W>0$. However, we find the key rate drops sharply with increasing $W$, which could be down to the choice of finite dimensional subspace when applying the dimension reduction technique. This behavior also prevents us from obtaining a non-zero key rate when using the analytical bounds for $W$ derived in \cref{app:weight}. Nonetheless, the larger key rates obtained from including bypass restrictions are encouraging.   

We further compare this to the lower bound of Ref.~\cite{ghalaii2023satellitebased} given by \cref{eq:extBP}. For the regime of $\eta_{AE}$ and $\mu$ we consider, \cref{eq:extBP} does not give a positive key rate, and our results show the potential improvement with our numerical approach. Note however that here we have assumed particular values of $W$, which is not assumed in \cref{eq:extBP}. The improvement is therefore not conclusive at this stage, though appears promising.   

\begin{figure}[h]
    \includegraphics[width=7.5cm]{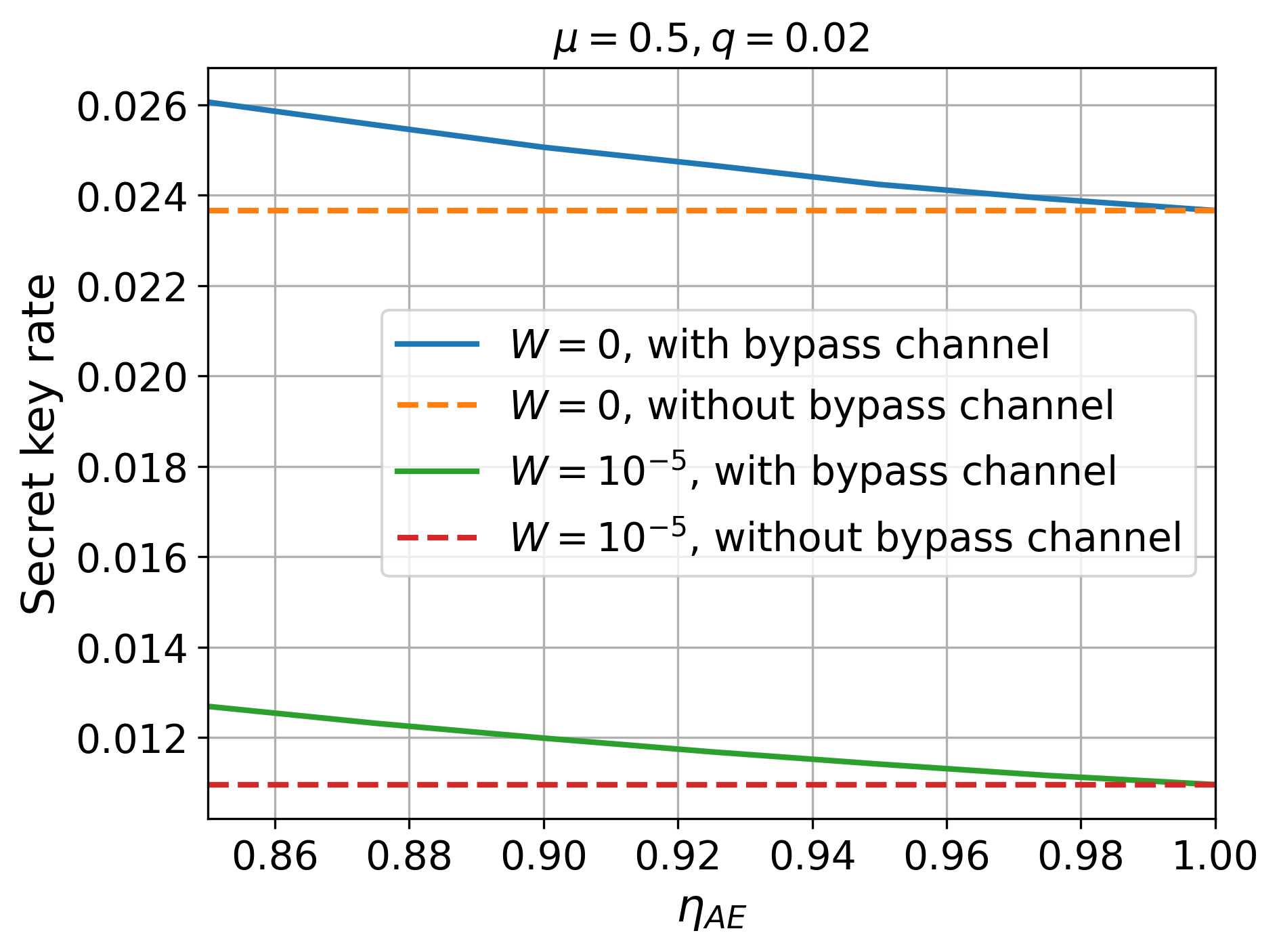}
    \caption{Lower bounds on the key rate of same protocol, obtained by minimizing over $\eta_{T}$ at each $\eta_{AE}$. This is compared to the key rate without a bypass channel, derived from our numerical framework when $\eta_{AE} = \eta_{T} = 1$. We compare both the case of $W=0$ and $W=10^{-5}$, where the analysis bounds the weight of the state outside the two photon subspace by $W$ using the dimension reduction technique. The lower bound derived in Ref.~\cite{ghalaii2023satellitebased} (cf. \cref{eq:extBP}) is negative for all values of $\eta_{AE}$, and hence omitted from the plot. Note however no assumption on the value of $W$ is made in Ref.~\cite{ghalaii2023satellitebased}.}
    \label{fig:WCP1_b}
\end{figure}

In \cref{fig:WCP_2}, we study the behavior of the key rate with $\eta_{T}$ at different values of $\eta_{AE}$. \cref{fig:WCP_2}(a) shows that as $\eta_{AE}$ decreases, the range of $\eta_{T}$ compatible with the observed statistics increases. We also see that for smaller values of $\eta_{AE}$, the lowest key rate is achieved at a smaller value of $\eta_{T}$. This trend is displayed in \cref{fig:WCP_2}(b), where for each $\eta_{AE}$, we plot the $\eta_{T}$ that achieves the minimum. A possible justification for this behavior is that as the restriction on Eve increases, i.e., $\eta_{AE}$ decreases, Eve can access a smaller fraction of the signal sent from Alice. When $\eta_{T}$ is close to 1, the fraction not received by Eve is also not received by Bob, forcing Eve to behave honestly in order to satisfy the observed amount of noise, e.g., achieve the fixed QBER. On the other hand, when $\eta_{T} <1$, Eve can be more dishonest as more of Alice's signal reaches Bob via the bypass channel. If $\eta_{T}$ is too low, there is little transmissivity between Eve and Bob, and her possible attacks become restricted. 

\begin{figure}[h]
    \includegraphics[width=0.9\linewidth]{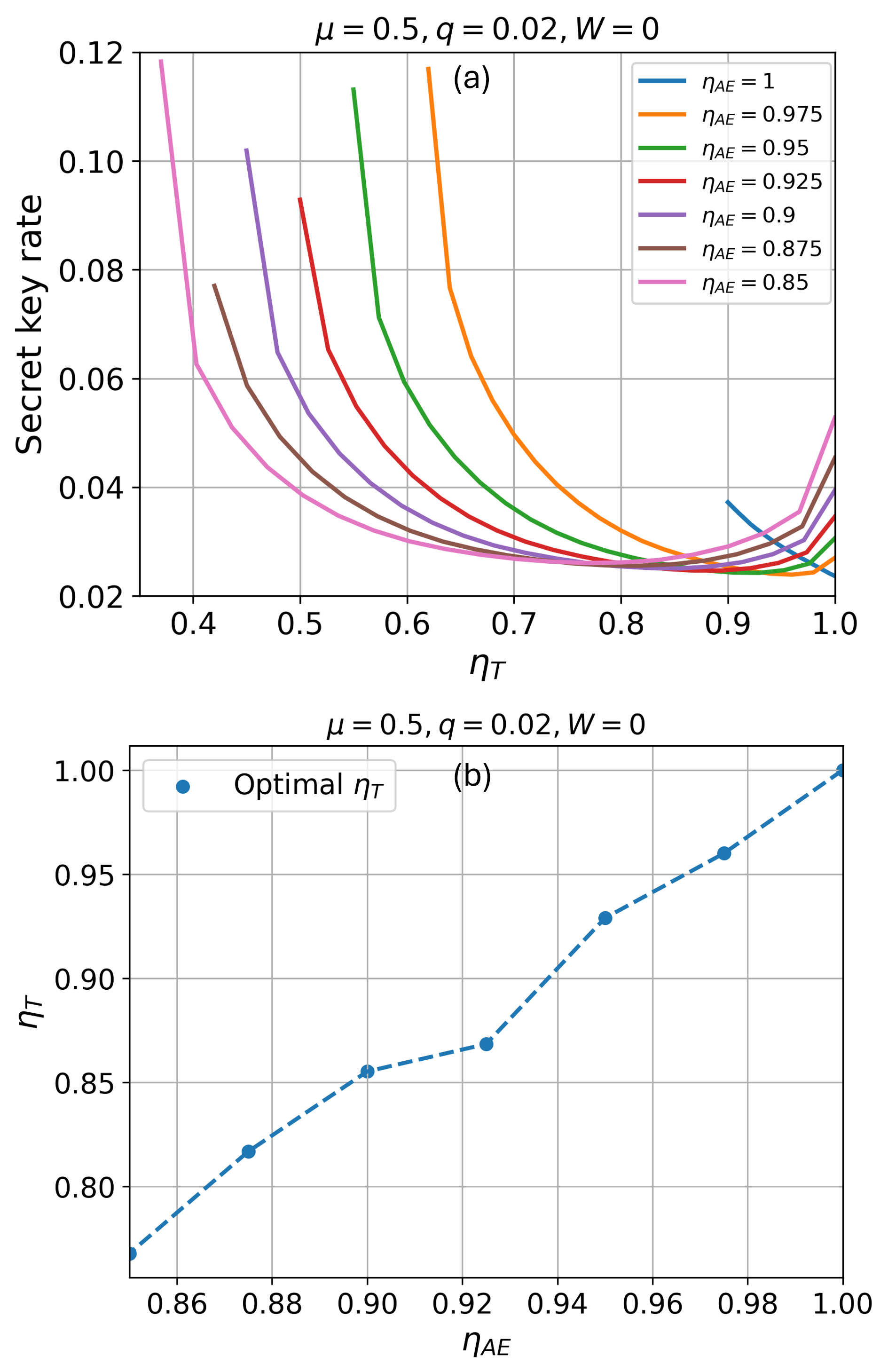}
\caption{The dependence of the key rate on the bypass parameter $\eta_{T}$ for BB84 with weak coherent pulses. (a) The key rate as a function of $\eta_{T}$ for fixed values of $\eta_{AE}$. (b) The value of $\eta_{T}$ that achieves the minimum key rate versus $\eta_{AE}$. The mean photon number, noise parameter and dimension reduction parameter are given by $\mu$, $q$ and $W$, respectively.} 
\label{fig:WCP_2}
\end{figure}

\ifarxiv\section{Summary and Discussion}\label{sec:discussion}\else\noindent{\it Summary and Discussion.|}\fi

Understanding how physically restricted eavesdropping impacts QKD is both an interesting mathematical problem as well as a possible tool for improving key rates in certain contexts, such as satellite based QKD. In this work, we detailed a versatile numerical technique to provide key rates for discrete variable protocols in such settings. We first studied the simplest case of single photon BB84, as was addressed in Ref.~\cite{ghalaii2023satellitebased}. There, it was found that the bound tailored to the bypass channel offered no improvement over the standard key rate expression. Using our numerical approach, we confirmed this, concluding there is indeed no improvement from noiseless bypass channels for the simplest case of single photons. 

Next, we developed the single photon case beyond that considered in~\cite{ghalaii2023satellitebased} by introducing a mismatch in the receiver detector efficiencies. This was made possible due to the versatility of our numerical approach, and the standard analysis (without a bypass channel) was provided by~\cite{Winick_2018}. Contrary to the matched case, we find concrete improvements from including bypass restrictions, namely, both the rate is increased and non-zero key rates can be obtained at higher mismatches. It would be interesting to explore other imperfections and attacks for which the bypass channel benefits beyond the standard case. For example, a Trojan-horse attack~\cite{Vakhitov_01,Gisin06} can be modeled numerically~\cite{Winick_2018}, and can hence be adapted to the bypass channel using our framework. One might expect an improvement here, since any signal sent from Eve to Alice will pass through the first beam-splitter, reducing its intensity, and any reflected signal will experience the same effect. Moreover, a fraction of these reflections may reach Bob via the bypass channel, creating an interesting QKD scenario.        

Finally, we explored the possibility of using numerical techniques to tackle BB84 with phase randomized weak coherent pulses. We explored the practical case of low mean photon number and weak restrictions on Eve, an opposite regime to that previously investigated in Ref.~\cite{ghalaii2023satellitebased}. Our initial findings are encouraging, and we found that moderate restrictions on Eve can increase the key rate when weak coherent pulses are used. However, the current analysis based on dimension reduction is limited to regimes in which the parameter $W$, denoting the probability of the state being outside the finite subspace, is small. This may not be achievable in practice, where $W$ should be estimated directly from the observed statistics such as double-click events~\cite{Upadhyaya2021,Zhang_2021}. Making the bound more tolerant to larger values of $W$ could be achieved by taking a larger finite dimensional subspace when applying the dimension reduction technique. Such an analysis could lead to further improvements from bypass restrictions, though it would demand greater computational resources. We leave this as an avenue for future investigation.      

\ifarxiv\acknowledgments\else\medskip\noindent{\it Acknowledgements}|\fi L.W. is supported by EPSRC via the Quantum Communications Hub (Grant No.\ EP/SO23607/1) and the European Union’s Horizon Europe research and innovation programme under the project “Quantum Secure Networks Partnership” (QSNP, grant agreement No. 101114043). T.U. acknowledges the support of the Natural Sciences and Engineering Research Council of Canada (NSERC) through the Doctoral Postgraduate Scholarship. M.R. acknowledges partial funding from the UK EPSRC grants EP/Y037421/1 and EP/X040518/1.

\onecolumngrid
\appendix
\section{Proof of \cref{lem:relEnt}}
\label{app:proof1}
For this proof, we will need the following theorem that we reproduce for the convenience of the reader.
\begin{theorem}[\cite{Coles12} Theorem 1]
    Let $\ket{\Psi}_{ABE} \in \mathcal{H}_{A} \otimes \mathcal{H}_{B} \otimes \mathcal{H}_{E}$ be a tripartite pure state, with $\mathcal{H}_{A} = \mathbb{C}^{d}$ for a positive integer $d$. Let $\mathcal{Z}:\mathcal{D}(\mathcal{H}_{A}) \to \mathcal{D}(\mathcal{H}_{\mathsf{A}})$ be a pinching quantum channel in the eigenbasis of $\rho_{A} = \tr_{BE}[\ketbra{\Psi}{\Psi}]$, $\mathcal{Z}[\sigma_{A}] = \sum_{a=0}^{d-1} \bra{a}\sigma_{A}\ket{a}\ketbra{a}{a}_{\mathsf{A}}$, and $\tau_{\mathsf{A}BE} = (\mathcal{Z} \otimes \mathcal{I}_{BE})[\ketbra{\Psi}{\Psi}]$. Then
    \begin{equation}
        H(\mathsf{A}|E)_{\tau_{\mathsf{A}E}} = D\big(\tau_{AB} \| (\mathcal{Z} \otimes \mathcal{I}_{B})[\tau_{AB}]\big),
    \end{equation}
    where $\tau_{AB} = \tr_{E}[\ketbra{\Psi}{\Psi}]$. \label{thm:discord}
\end{theorem}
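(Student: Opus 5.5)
The plan is to turn both sides into differences of von Neumann entropies and compare them, using purity of $\ket{\Psi}_{ABE}$ and a Stinespring dilation of the pinching. I read $\tau_{AB}$ on the right-hand side as the \emph{unpinched} marginal $\tr_{E}[\ketbra{\Psi}{\Psi}]$, as in the statement. I assume all entropies involved are finite (true if $B$ and $E$ are finite dimensional), so that subtracting them is legitimate.

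\emph{Step 1: rewrite the relative entropy.} Write $\rho_{AB}:=\tr_E[\ketbra{\Psi}{\Psi}]$. The operator $(\mathcal{Z}\otimes\mathcal{I}_B)[\rho_{AB}]=\sum_a \ketbra{a}{a}\otimes \bra{a}\rho_{AB}\ket{a}$ is block diagonal in $a$, and so is its logarithm. Since the pinching is self-adjoint and leaves block-diagonal operators invariant, $\tr[\rho_{AB}\log \mathcal{Z}(\rho_{AB})]=\tr[\mathcal{Z}(\rho_{AB})\log\mathcal{Z}(\rho_{AB})]$. This gives
\[
D\big(\rho_{AB}\|\mathcal{Z}(\rho_{AB})\big)=H(\mathsf{A}B)_{\tau}-H(AB)_{\rho}.
\]
On the support condition: $\mathrm{Supp}[\rho_{AB}]\subseteq\mathrm{Supp}[\mathcal{Z}(\rho_{AB})]$ always holds, because each diagonal block $\bra{a}\rho_{AB}\ket{a}$ controls the off-diagonal ones by positivity.

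\emph{Step 2: the second term.} Since $\ket{\Psi}_{ABE}$ is pure, $H(AB)_\rho=H(E)_\Psi$. The pinching acts only on $A$, so it does not change the $E$-marginal, and hence $H(E)_\Psi=H(E)_\tau$.

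\emph{Step 3: the first term (main step).} I need $H(\mathsf{A}B)_\tau=H(\mathsf{A}E)_\tau$, which does not follow from purity directly because $\tau_{\mathsf{A}BE}$ is mixed. I will dilate the pinching with the isometry $W:\ket{a}_A\mapsto\ket{a}_{\mathsf A}\ket{a}_{A'}$ and set $\ket{\Psi'}=(W\otimes\id_{BE})\ket{\Psi}$. Then $\tau_{\mathsf{A}BE}=\tr_{A'}[\ketbra{\Psi'}{\Psi'}]$. Purity of $\ket{\Psi'}$ gives $H(\mathsf{A}B)_{\Psi'}=H(A'E)_{\Psi'}$. Because $\mathsf{A}$ and $A'$ are perfectly correlated classical copies, $\Psi'_{A'E}=\sum_a\ketbra{a}{a}\otimes\tr_{AB}[(\ketbra{a}{a}_A\otimes\id)\ketbra{\Psi}{\Psi}]=\tau_{\mathsf{A}E}$ up to relabelling $A'\leftrightarrow\mathsf A$. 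Hence $H(\mathsf{A}B)_\tau=H(\mathsf{A}E)_\tau$. The only care needed is in writing out $\Psi'_{A'E}$ explicitly, which is why I expect this to be the main obstacle.

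\emph{Step 4: combine.} Steps 1--3 give $D(\rho_{AB}\|\mathcal{Z}(\rho_{AB}))=H(\mathsf{A}E)_\tau-H(E)_\tau=H(\mathsf{A}|E)_{\tau}$, which is the claim. The argument never uses that the pinching basis is the eigenbasis of $\rho_A$, so the identity holds for a pinching in any orthonormal basis of $A$. This generality is what the proof of \cref{lem:relEnt} needs, where the role of $A$ is played by $\tilde{A}$ after applying $V_x$.
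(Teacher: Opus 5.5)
Your proof is correct, but it reaches the identity by a different route than the paper. The paper works branch by branch. It writes $\ket{\Psi}=\sum_a\sqrt{p_a}\ket{a}_A\ket{\psi_a}_{BE}$ and evaluates $\tr[\tau_{AB}\log(\mathcal{Z}\otimes\mathcal{I}_B)[\tau_{AB}]]=\sum_a\tr[\tau_B^a\log\tau_B^a]=-H(\{p_a\})-\sum_a p_a S(\hat{\tau}_B^a)$. It then uses purity of each branch $\ket{\psi_a}_{BE}$ to replace $S(\hat{\tau}_B^a)$ by $S(\hat{\tau}_E^a)$, which gives $-S(\tau_{\mathsf{A}E})$.

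You instead get $H(\mathsf{A}B)_\tau=H(\mathsf{A}E)_\tau$ in one step from the global purity of the dilated state $\ket{\Psi'}$. The symmetry between the two classical copies $\mathsf{A}$ and $A'$ does the work of the paper's branch-wise duality.

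Your route has three advantages. It is shorter, and it makes independence from the choice of pinching basis manifest. It also handles the support condition properly through your positivity argument, whereas the paper only remarks that the expression is finite.

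The paper's route is more explicit and needs no auxiliary register. It also never uses orthogonality of the $\ket{\psi_a}$, so it does not really rely on the eigenbasis hypothesis either.

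Your closing remark needs a small correction. In the proof of \cref{lem:relEnt}, the reduced state of $\ket{\Psi_x}$ on $\tilde{A}$ is proportional to $\sum_a\ketbra{a}{a}\,\tr[(M_{a,x}\otimes\id)\ketbra{\psi_x}{\psi_x}]$. This is already diagonal in $\{\ket{a}\}$ because the $M_{a,x}$ are orthogonal projectors. So the eigenbasis hypothesis is in fact satisfied there, and the extra generality, though true, is not needed.
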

\begin{proof}
    First note that, since $\tau_{ABE} = \ketbra{\Psi}{\Psi}$ is a pure state,
    \begin{equation}
        \tr[\tau_{AB}\log(\tau_{AB})] = -S(\tau_{E}), \label{eq:rel1}
    \end{equation}
    where $S(\rho) = -\tr[\rho \log(\rho)]$ is the von Neumann entropy. We also have
    \begin{equation}
        \log \Big((\mathcal{Z} \otimes \mathcal{I}_{B})[\tau_{AB}]\Big) = \sum_{a=0}^{d-1} \ketbra{a}{a}_{\mathsf{A}} \otimes \log (\tau_{B}^{a})
    \end{equation}
    where $\tau_{B}^{a} = \tr_{A}[(\ketbra{a}{a} \otimes \id_{B})\tau_{AB}]$. This implies 
    \begin{equation}
        \tr\Big[\tau_{AB}\log \Big((\mathcal{Z} \otimes \mathcal{I}_{BE})[\tau_{AB}]\Big)\Big] = \sum_{a=0}^{d-1} \tr[\tau_{B}^{a}\log(\tau_{B}^{a})].
    \end{equation}
    Consider the Schmidt decomposition for $\ket{\Psi}$, $\ket{\Psi} = \sum_{a}\sqrt{p_{a}} \ket{a}_{A} \otimes \ket{\psi_{a}}_{BE}$, where $\{p_{a}\}_{a=0}^{d-1}$ is a probability distribution and $\{\ket{\psi_{a}}\}_{a=0}^{d-1}$ is an set of orthogonal states in $\mathcal{H}_{B} \otimes \mathcal{H}_{E}$. It follows that $\tau_{B}^{a} = p_{a} \tr_{E}[\ketbra{\psi_{a}}{\psi_{a}}]$, and
    \begin{equation}
        \sum_{a=0}^{d-1}\tr[\tau_{B}^{a}\log(\tau_{B}^{a})] = -H(\{p_{a}\}_{a=0}^{d-1}) -\sum_{a=0}^{d-1}p_{a}S(\hat{\tau}_{B}^{a}) = -H(\{p_{a}\}_{a=0}^{d-1}) -\sum_{a=0}^{d-1}p_{a}S(\hat{\tau}_{E}^{a}) = -S(\tau_{\mathsf{A}E}), \label{eq:rel2}
    \end{equation}
    where $\hat{\tau}_{B}^{a} = \tr_{E}[\ketbra{\psi_{a}}{\psi_{a}}]$, $\hat{\tau}_{E}^{a} = \tr_{B}[\ketbra{\psi_{a}}{\psi_{a}}]$ and $H(\{p_{a}\}_{a=0}^{d-1}) = -\sum_{a}p_{a}\log(p_{a})$ is the Shannon entropy. The first equality follows from a direct calculation. To obtain the second inequality, we used the fact that $S(\hat{\tau}_{B}^{a}) = S(\hat{\tau}_{E}^{a})$ since $\ketbra{\psi_{a}}{\psi_{a}}_{BE}$ is pure. The final equality follows from the fact that 
    \begin{equation}
        \tau_{\mathsf{A}E} = \tr_{B}[(\mathcal{Z}\otimes \mathcal{I})[\ketbra{\Psi}{\Psi}]] = \sum_{a}\ketbra{a}{a}_{\mathsf{A}}\otimes p_{a}\tr_{B}[\ketbra{\psi_{a}}{\psi_{a}}] = \sum_{a}\ketbra{a}{a}_{\mathsf{A}}\otimes \tau_{E}^{a}
    \end{equation}
    which implies
    \begin{equation}
        S(\tau_{\mathsf{A}E}) = - \sum_{a=0}^{d-1}\tr[\tau_{E}^{a}\log(\tau_{E}^{a})] = H(\{p_{a}\}_{a=0}^{d-1}) +\sum_{a=0}^{d-1}p_{a}S(\hat{\tau}_{E}^{a}).
    \end{equation}
    Combining \cref{eq:rel1,eq:rel2} with the definition of the conditional von Neumann entropy, we find
    \begin{equation}
         \tr[\tau_{AB}\log(\tau_{AB})] - \tr\Big[\tau_{AB}\log \Big((\mathcal{Z} \otimes \mathcal{I}_{BE})[\tau_{AB}]\Big)\Big] = 
        - S(\tau_{E}) + S(\tau_{\mathsf{A}E}) = H(\mathsf{A}|E)_{\tau_{\mathsf{A}E}}.
    \end{equation}
    Noting that the above expression is always finite, the left hand side is exactly equal to $D\big(\tau_{AB} \| (\mathcal{Z} \otimes \mathcal{I}_{B})[\tau_{AB}]\big)$, completing the proof. 
\end{proof}

We now prove the main result of this appendix. 

\vspace{0.2cm}

\noindent \textbf{Lemma 1.} \textit{Let $\big (\ket{\psi^{\mathrm{init}}},U_{1},U_{BE},U_{2}, M_{A},N_{B} \big)$ be any quantum strategy in the bypass scenario that gives rise to the post-selected state $\tilde{\rho}$ in \cref{eq:pms}. Then the following holds:
\begin{equation*}
    \mathrm{Pr}[\mathrm{pass}] \, H(\mathsf{A}|\mathsf{XY}E)_{\tilde{\rho}} \\ = \sum_{x \in \{0,1\}} D\Big( V_{x}\rho_{ABF,x}'V_{x}^{\dagger} \big \| (\mathcal{Z}_{\tilde{A}} \otimes \mathcal{I}_{ABF}) \big[ V_{x}\rho_{ABF,x}'V_{x}^{\dagger} \big] \Big),
\end{equation*}
where
\begin{equation*}
    \rho_{ABF,x}' = \tr_{E}\big[\ketbra{\psi_{x}'}{\psi_{x}'}\big], \ \ \ \ \ket{\psi_{x}'} = \Bigg(\id_{A} \otimes U_{2}^{\dagger}\sqrt{\sum_{b \in \{0,1\}}N_{b,x}}U_{2} \otimes \id_{E}\Bigg)\ket{\psi'},
\end{equation*}
$V_{x}: \mathcal{H}_{A}\otimes \mathcal{H}_{BF} \to \mathcal{H}_{\tilde{A}}\otimes \mathcal{H}_{A}\otimes \mathcal{H}_{BF}$, where $\mathcal{H}_{\tilde{A}} \cong \mathbb{C}^{2}$, is defined by
\begin{equation*}
    V_{x} = \sum_{a \in \{0,1\}} \ket{a}_{\tilde{A}} \otimes M_{a,x} \otimes \id_{BF}
\end{equation*}
and satisfies $\sum_{x}V_{x}^{\dagger}V_{x} = \id_{ABF}$, and $\mathcal{Z}_{\tilde{A}}:\mathcal{L}(\mathcal{H}_{\tilde{A}}) \to \mathcal{L}(\mathcal{H}_{\mathsf{A}})$ is the pinching channel on $\tilde{A}$,
\begin{equation*}
    \mathcal{Z}_{\tilde{A}}[\sigma] = \sum_{a \in \{0,1\}} \bra{a}\sigma \ket{a}\ketbra{a}{a}_{\mathsf{A}}
\end{equation*}
for all $\sigma \in \mathcal{L}(\mathcal{H}_{\tilde{A}})$.} 

\begin{proof}
    Recall that the post-measurement state $\tilde{\rho}$ is given by
    \begin{equation}
        \tilde{\rho}_{\mathsf{ABXY}E}  
    = \frac{1}{\text{Pr}[\text{pass}]} \sum_{x \in \{0,1\}} \sum_{a,b \in \{0,1\}} \ketbra{a,b}{a,b}_{\mathsf{AB}} \\ \otimes \ketbra{x,x}{x,x}_{\mathsf{XY}} \otimes \rho_{E}^{a,b,x,x}.
    \end{equation}
    with $\rho_{E}^{a,b,x,y} = \tr_{ABF}\big[ (M_{a,x} \otimes N_{b,y} \otimes \id_{E})\ketbra{\psi}{\psi}_{ABFE}\big]$. We begin by noting that, for classical quantum states, the relative entropy conditioned on a classical system takes the form 
\begin{equation}
    H(\mathsf{A}|\mathsf{XY}E)_{\tilde{\rho}} = \sum_{x,y} \text{Pr}[\mathsf{X}=x,\mathsf{Y}=y|\text{pass}]H(\mathsf{A}|\mathsf{X}=x,\mathsf{Y}=y,E)_{\tilde{\rho}} = \sum_{x} \text{Pr}[\mathsf{X} = x,\mathsf{Y}=x|\text{pass}] H(\mathsf{A}|E)_{\tilde{\rho}_{\mathsf{A}E|\mathsf{X}=\mathsf{Y}=x}}, \label{eq:cdecomp}
\end{equation}
where
\begin{equation}
\begin{gathered}
    \tilde{\rho}_{\mathsf{A}E|\mathsf{X}=\mathsf{Y}=x} = \frac{1}{\text{Pr}[\mathsf{X} = x,\mathsf{Y}=x|\text{pass}]}\frac{1}{\text{Pr}[\text{pass}]} \sum_{a} \ketbra{a}{a}_{\mathsf{A}} \otimes \rho_{E}^{a,x} \\
    \text{Pr}[\mathsf{X} = x,\mathsf{Y}=x|\text{pass}] = \frac{1}{\text{Pr}[\text{pass}]} \sum_{a,b} \tr[\rho_{E}^{a,b,x,x}] \ \ \ \  \text{and} \\
    \rho_{E}^{a,x} = \sum_{b}\rho_{E}^{a,b,x,x} = \tr_{ABF} \big[ (M_{a,x} \otimes \sum_{b}N_{b,x}\otimes \id_{FE})\ketbra{\psi}{\psi}\big].
\end{gathered}
\end{equation}
Using the cyclic property of the partial trace we can write $\rho_{E}^{a,x} = \tr_{ABF} \big[ (M_{a,x} \otimes \id_{BFE})\ketbra{\psi_{x}}{\psi_{x}}\big]$ where
\begin{equation}
    \ket{\psi_{x}} = \Bigg(\id_{A} \otimes \sqrt{\sum_{b} N_{b,x}} \otimes \id_{E}\Bigg)\ket{\psi}.
\end{equation}
By noting $\text{Pr}[\mathsf{X} = x, \mathsf{Y}= x] = \text{Pr}[\text{pass}]\text{Pr}[\mathsf{X} = x,\mathsf{Y}=x|\text{pass}]$, we can put this together to obtain 
\begin{equation}
     \tilde{\rho}_{\mathsf{A}E|\mathsf{X}=\mathsf{Y}=x} = \frac{1}{\text{Pr}[\mathsf{X} = x,\mathsf{Y}=x]} \sum_{a} \ketbra{a}{a}_{\mathsf{A}} \otimes \tr_{ABF} \big[ (M_{a,x} \otimes \id_{BFE})\ketbra{\psi_{x}}{\psi_{x}}\big].
\end{equation}

Having decomposed the objective function, we now rewrite it in terms of the quantum relative entropy. Define the map $V_{x}:\mathcal{H}_{A} \otimes \mathcal{H}_{BF}  \to \mathcal{H}_{\tilde{A}} \otimes \mathcal{H}_{A} \otimes \mathcal{H}_{BF}$,
\begin{equation}
    V_{x} = \sum_{a} \ket{a}_{\tilde{A}} \otimes M_{a,x} \otimes \id_{BF}.
\end{equation}
Note that $(V^{\dagger}_{x}V_{x} \otimes \id_{E})\ket{\psi_{x}} = \ket{\psi_{x}}$. We then have
\begin{equation}
    (V_{x} \otimes \id_{E})\ketbra{\psi_{x}}{\psi_{x}}(V^{\dagger}_{x} \otimes \id_{E}) = \sum_{a,a'}\ketbra{a}{a'}_{\tilde{A}} \otimes (M_{a,x}\otimes \id_{BFE})\ketbra{\psi_{x}}{\psi_{x}}(M_{a',x}\otimes \id_{BFE}).
\end{equation}
Next we define the pinching channel and $\mathcal{Z}_{\tilde{A}}:\mathcal{L}(\mathcal{H}_{\tilde{A}}) \to \mathcal{L}(\mathcal{H}_{\mathsf{A}})$ is the pinching channel on $\tilde{A}$,
\begin{equation}
    \mathcal{Z}_{\tilde{A}}[\sigma] = \sum_{a \in \{0,1\}} \bra{a}\sigma \ket{a}\ketbra{a}{a}_{\mathsf{A}},
\end{equation}
and define
\begin{equation}
    \tau_{\mathsf{A}E|x} := \frac{1}{\text{Pr}[\mathsf{X} = x,\mathsf{Y}=x]}\Big( \mathcal{Z}_{\tilde{A}} \otimes \mathcal{I}_{E} \Big)\Big[\tr_{ABF}[(V_{x} \otimes \id_{E})\ketbra{\psi_{x}}{\psi_{x}}(V^{\dagger}_{x} \otimes \id_{E})]\Big] .
\end{equation}
By direct calculation
\begin{multline}
    \frac{1}{\text{Pr}[\mathsf{X} = x,\mathsf{Y}=x]}\Big( \mathcal{Z}_{\tilde{A}} \otimes \mathcal{I}_{E} \Big)\Big[\tr_{ABF}[(V_{x} \otimes \id_{E})\ketbra{\psi_{x}}{\psi_{x}}(V^{\dagger}_{x} \otimes \id_{E})]\Big] \\ = \frac{1}{\text{Pr}[\mathsf{X} = x,\mathsf{Y}=x]}\sum_{a} \ketbra{a}{a}_{\mathsf{A}} \otimes  \tr_{ABF}[(M_{a,x}\otimes \id_{BFE})\ketbra{\psi_{x}}] = \tilde{\rho}_{\mathsf{A}E|\mathsf{X}=\mathsf{Y}=x},
\end{multline}
and therefore
\begin{equation}
    H(\mathsf{A}|E)_{\tilde{\rho}_{\mathsf{A}E|\mathsf{X}=\mathsf{Y}=x}} = H(\mathsf{A}|E)_{\tau_{\mathsf{A}E|x}}.
\end{equation}
Note that $\ket{\Psi_{x}} := (V_{x} \otimes \id_{E})\ket{\psi_{x}} / \sqrt{\text{Pr}[\mathsf{X}=x,\mathsf{Y}=x]}$ is a normalized pure state on systems $\tilde{A}ABFE$, and $\mathcal{Z}$ is a pinching quantum channel which maps $\tilde{A} \mapsto \mathsf{A}$. Let us write 
\begin{equation}
    \tau_{\tilde{A}ABF|x} = \tr_{E}[\ketbra{\Psi_{x}}{\Psi_{x}}].
\end{equation}
Then applying \cref{thm:discord} with $\ket{\Psi} = \ket{\Psi_{x}}$, 
\begin{equation}
    H(\mathsf{A}|E)_{\tau_{\mathsf{A}E|x}} = D\Big( \tau_{\tilde{A}ABF|x} \| (\mathcal{Z}_{\tilde{A}} \otimes \mathcal{I}_{ABF})[\tau_{\tilde{A}ABF|x}] \Big) = \frac{1}{\text{Pr}[\mathsf{X}=x,\mathsf{Y}=x]}D\Big( V_{x}\rho_{ABF,x}V_{x}^{\dagger} \| (\mathcal{Z}_{\tilde{A}} \otimes \mathcal{I}_{ABF})[V_{x}\rho_{ABF,x}V_{x}^{\dagger}] \Big),
\end{equation}
where we used the fact that $V_{x}$ and $\tr_{E}$ commute, defined $\rho_{ABF,x} := \tr_{E}[\ketbra{\psi_{x}}{\psi_{x}}]$ and we used the identity $c D(\rho \| \sigma) = D(c \rho \| c \sigma)$ for a constant $c > 0$. Combining with \cref{eq:cdecomp} we obtain
\begin{equation}
    \text{Pr}[\text{pass}] \, H(\mathsf{A}|\mathsf{XY}E)_{\tilde{\rho}} = \sum_{x} D\Big( V_{x}\rho_{ABF,x}V_{x}^{\dagger} \| (\mathcal{Z}_{\tilde{A}} \otimes \mathcal{I}_{ABF})[V_{x}\rho_{ABF,x}V_{x}^{\dagger}] \Big),
\end{equation}
where we used the fact that $\text{Pr}[\mathsf{X}=x,\mathsf{Y}=y] = \text{Pr}[\text{pass}] \, \text{Pr}[\mathsf{X}=x,\mathsf{Y}=y | \text{pass}]$.

To complete the proof, we need to rewrite the objective function in terms of the state prior to the final beam splitter. To do so, we will use the fact that $D(\rho \| \sigma) = D(U\rho U^{\dagger} \| U\sigma U^{\dagger})$ for any unitary $U$. Recall $U_{2}$ is the unitary on $BF$ that models the final beam splitter, and satisfies 
\begin{equation}
    (\id_{A} \otimes U_{2} \otimes \id_{E})\ket{\psi'}_{ABFE} = \ket{\psi}_{ABFE}.
\end{equation}
Suppressing the tensor product with identity, $U_{2} \equiv \id_{A} \otimes U_{2} \otimes \id_{E}$, we have 
\begin{equation}
\begin{aligned}
    D\Big( V_{x}\rho_{ABF,x}V_{x}^{\dagger} \| (\mathcal{Z}_{\tilde{A}} \otimes \mathcal{I}_{ABF})[V_{x}\rho_{ABF,x}V_{x}^{\dagger}] \Big) &= D\Big( U_{2}^{\dagger}V_{x}\rho_{ABF,x}V_{x}^{\dagger}U_{2} \| U_{2}^{\dagger}(\mathcal{Z}_{\tilde{A}} \otimes \mathcal{I}_{ABF})[V_{x}\rho_{ABF,x}V_{x}^{\dagger}] U_{2}\Big) \\
    &= D\Big( V_{x}U_{2}^{\dagger}\rho_{ABF,x}U_{2}V_{x}^{\dagger} \| (\mathcal{Z}_{\tilde{A}} \otimes \mathcal{I}_{ABF})[V_{x}U_{2}^{\dagger}\rho_{ABF,x}U_{2}V_{x}^{\dagger}] \Big), 
\end{aligned}
\end{equation}
where we used the fact that $U_{2}$ commutes with $V_{x}$ and $\mathcal{Z}_{\tilde{A}}$. Note that
\begin{equation}
    U_{2}^{\dagger}\rho_{ABF,x}U_{2} = U_{2}^{\dagger}\tr_{E}[\ketbra{\psi_{x}}{\psi_{x}}]U_{2} = \tr_{E}\big[ U^{\dagger}_{2}\ketbra{\psi_{x}}{\psi_{x}}U_{2} \big],
\end{equation}
and recall $\ket{\psi_{x}} = (\id_{A} \otimes \sqrt{\sum_{b}N_{b,x}} \otimes \id_{E})\ket{\psi}$, which implies
\begin{equation}
   U^{\dagger}_{2}\ket{\psi_{x}} = U_{2}^{\dagger}\Bigg(\id_{A} \otimes \sqrt{\sum_{b}N_{b,x}} \otimes \id_{E}\Bigg)U_{2}U_{2}^{\dagger}\ket{\psi} =  (\id_{A} \otimes P_{x} \otimes \id_{E})\ket{\psi'} =: \ket{\psi_{x}'},
\end{equation}
where $P_{y} = U_{2}^{\dagger}\sqrt{\sum_{b}N_{b,y}}U_{2}$. By writing
\begin{equation}
    \rho_{ABF,x}' = \tr_{E}\big[ \ketbra{\psi_{x}'}{\psi_{x}'}\big],
\end{equation}
we find
\begin{equation}
    D\Big( V_{x}\rho_{ABF,x}V_{x}^{\dagger} \| (\mathcal{Z} \otimes \mathcal{I}_{ABF})[V_{x}\rho_{ABF,x}V_{x}^{\dagger}] \Big) = D\Big( V_{x}\rho_{ABF,x}'V_{x}^{\dagger} \| (\mathcal{Z} \otimes \mathcal{I}_{ABF})[V_{x}\rho_{ABF,x}'V_{x}^{\dagger}] \Big),
\end{equation}
completing the proof. 
\end{proof}

\section{Bob's measurements} \label{app:POVM}
In this appendix, we describe Bob's measurement on system $B$ in terms of the POVM elements $N_{b,y}$. These are described as operators acting on the (infinite dimensional) photon number space $\mathcal{H}_{B} \otimes \mathcal{H}_{F}$ (acting trivially on $F$)~\cite{Zhang_2017},
\begin{equation}
\begin{aligned}
    N_{0,0} &= p_{z}\sum_{k,l=0}^{\infty}\sum_{n=1}^{\infty} \ketbra{n_{B_{H}},0_{B_{V}},k_{F_{H}},l_{F_{V}}}{n_{B_{H}},0_{B_{V}},k_{F_{H}},l_{F_{V}}} + \frac{1}{2}N_{HV}, \\
    N_{1,0} &= p_{z}\sum_{k,l=0}^{\infty}\sum_{n=1}^{\infty} \ketbra{0_{B_{H}},n_{B_{V}},k_{F_{H}},l_{F_{V}}}{0_{B_{H}},n_{B_{V}},k_{F_{H}},l_{F_{V}}} + \frac{1}{2}N_{HV},  \\
    N_{HV} &= p_{z}\sum_{k,l=0}^{\infty}\sum_{n,m=1}^{\infty}\ketbra{n_{B_{H}},m_{B_{V}},k_{F_{H}},l_{F_{V}}}{n_{B_{H}},m_{B_{V}},k_{F_{H}},l_{F_{V}}}, \\
    N_{0,1} &= (1-p_{z})\sum_{k,l=0}^{\infty}\sum_{n=1}^{\infty} \ketbra{n_{B_{+}},0_{B_{-}},k_{F_{H}},l_{F_{V}}}{n_{B_{+}},0_{B_{-}},k_{F_{H}},l_{F_{V}}} + \frac{1}{2}N_{\pm}, \\
    N_{1,1} &= (1-p_{z})\sum_{k,l=0}^{\infty}\sum_{n=1}^{\infty} \ketbra{0_{B_{+}},n_{B_{-}},k_{F_{H}},l_{F_{V}}}{0_{B_{+}},n_{B_{-}},k_{F_{H}},l_{F_{V}}} + \frac{1}{2}N_{\pm}, \\
    N_{\pm} &= p_{z}\sum_{k,l=0}^{\infty}\sum_{n,m=1}^{\infty} \ketbra{n_{B_{+}},m_{B_{-}},k_{F_{H}},l_{F_{V}}}{n_{B_{+}},m_{B_{-}},k_{F_{H}},l_{F_{V}}}, \ \ \text{and} \\
    N_{\perp,\perp} &= \id_{BF} - \sum_{b,y \in \{0,1\}}N_{b,y},
\end{aligned}
\end{equation}
where $p_{z}$ is the probability Bob chooses the $H/V$ basis. Notice we sum over the bypass modes $F_{H},F_{V}$ to account for the fact that these degrees of freedom are not accessible to Bob's measurement device. The rotated measurements are defined by $N'_{b,y} = U_{2}^{\dagger}N_{b,y}U_{2}$, where $U_{2}$ is the unitary describing the final beam splitter. We remark that each of the POVM elements above are block diagonal in photon number space. Specifically, they are of the form
\begin{multline}
    N_{b,y} = \tilde{N}_{b,y} \otimes \id_{F} = \Bigg(\sum_{n=0}^{\infty} \Pi_{B}^{n}\tilde{N}_{b,y}\Pi_{B}^{n}\Bigg) \otimes \Bigg(\sum_{m = 0}^{\infty} \Pi_{F}^{m}\Bigg) = \sum_{n,m=0}^{\infty}   \Pi_{B}^{n}\tilde{N}_{b,y}\Pi_{B}^{n} \otimes \Pi_{F}^{m} \\ = \sum_{n=0}^{\infty} \Bigg(\sum_{k=0}^{n} (\Pi_{B}^{n-k} \otimes  \Pi_{F}^{k})N_{b,y}(\Pi_{B}^{n-k} \otimes  \Pi_{F}^{k})\Bigg) = \sum_{n=0}^{\infty} N_{b,y}^{n}, 
\end{multline}
where $\tilde{N}_{b,y}$ is the factor of $N_{b,y}$ that acts on $B$ only ($\tilde{N}_{b,y}$ has the property that it is block diagonal with respect to the total photon number in $B$), and $N_{b,y}^{n} = \sum_{k=0}^{n} (\Pi_{B}^{n-k} \otimes  \Pi_{F}^{k})N_{b,y}(\Pi_{B}^{n-k} \otimes  \Pi_{F}^{k})$ is the block of $N_{b,y}$ acting on the $n$ photon subspace of $\mathcal{H}_{B} \otimes \mathcal{H}_{F}$.

\section{Additional details on the dimension reduction technique} \label{app:pTrace}

In this appendix, we elaborate on the final optimization problem in \cref{eq:finalOptDR} obtained after applying the dimension reduction technique. Specifically, we address how to define the partial trace constraint, and why this is non-trivial in the bypass scenario. 

The quantity $\tr_B[\rho_{ABF}]$ arises in the constraints of \cref{eq:finalOptDR}, where the partial trace mapping typically takes a tensor product of two Hilbert spaces to a single one. However, the finite-dimensional Hilbert space  $\Omega [\mathcal{H}_{BF}]$ will not in general be of tensor product form (see the definition of $\mathcal{H}_{BF}^{\mathrm{sp}}$ in \cref{eq:SPH} for a concrete example). Nevertheless, $\tr_B[\rho_{ABF}]$ is still well defined, since we can view the finite dimensional state $\rho_{ABF} \in \mathcal{P}(\mathcal{H}_{A} \otimes \Omega [\mathcal{H}_{BF}])$ as an operator on the infinite dimensional Hilbert space $\mathcal{H}_{A} \otimes \mathcal{H}_{BF}$. Formally, computing the partial trace in this space would require an infinite summation. However, in this appendix, we show that because the operator $\rho_{ABF} \in \mathcal{P}(\mathcal{H}_{A} \otimes \Omega [\mathcal{H}_{BF}])$ is finite dimensional, the partial trace can be computed exactly using only a finite summation.

To see this, let $\ket{i}_{B}$ be an orthonormal basis for $\mathcal{H}_{B}$ and $\ket{j}_{F}$ be an orthonormal basis for $\mathcal{H}_{F}$. Let $\Omega [\mathcal{H}_{BF}]=\mathrm{Span}\{ \ket{ij}_{BF} : {(i,j)}\in S \}$ for some finite set of indices $S$. Next, we partition $S$ by all the possible values of $j$. That is, define  $S_j := \{ i \ :  \ (i,j) \in S \}$. Recall that the partial trace can written in terms of the matrix elements $\bra{k}_F \tr_B[\rho]\ket{l}_F =\sum_i \braket{i,k|\rho}{i,l}$. For any $\rho \in \mathcal{P}(\Omega [\mathcal{H}_{BF}])$, there are only finitely many nonzero contributions to this sum, $\bra{k}_F \tr_B[\rho] \ket{l}_F =\sum_{i \in S_k \cap S_l} \braket{i,k|\rho}{i,l}$. Numerically, can we apply the preceding formula to every block over the $A$ system to compute $\tr_B[\rho_{ABF}]$.
In a similar manner to~\cite[Appendix B.2]{Upadhyaya2021}, we can compute the adjoint of the partial trace map used in the dual key rate SDP. 

In more detail, recall the Hilbert space structure $\mathcal{H}_{A} \otimes \mathcal{H}_{B} \otimes \mathcal{H}_{F}$, where $\mathcal{H}_{A} \cong \mathbb{C}^{4}$, $\mathcal{H}_{B} = \mathcal{H}_{B_{H}} \otimes \mathcal{H}_{B_{F}}$, $\mathcal{H}_{F} = \mathcal{H}_{F_{H}} \otimes \mathcal{H}_{F_{V}}$, and $\mathcal{H}_{B_{H}} \cong \mathcal{H}_{B_{V}} \cong \mathcal{H}_{F_{H}} \cong \mathcal{H}_{F_{V}} \cong \text{Span}\{ \ket{n} \ : \ n = 0,1,2,...\}$. Consider any finite dimensional subspace $\mathcal{H}'_{BF} \subset \mathcal{H}_{B} \otimes \mathcal{H}_{F}$ of dimension $d$. Let $\mathcal{B}_{BF}' = \{\ket{s}\}_{s=0}^{d-1}$ be a basis for $\mathcal{H}_{BF}'$ - note that $\mathcal{H}_{BF}'$ need not decompose into a tensor product over $B$ and $F$. Suppose we choose our subspace and basis such that we can identify basis vectors in the following way:
\begin{equation}
    \ket{s} = \ket{b_{H}^{s},b_{V}^{s},f_{H}^{s},f_{V}^{s}},
\end{equation}
for some non-negative integers $b_{H}^{s},b_{V}^{s},f_{H}^{s},f_{V}^{s}$. Consider a state $\rho_{ABF} \in \mathcal{D}\big(\mathcal{H}_{A} \otimes \mathcal{H}_{BF}'\big)$. We can always write
\begin{equation}
    \rho_{ABF} = \sum_{i,j=0}^{3} \ketbra{i}{j}_{A} \otimes M_{i,j},
\end{equation}
where $M_{i,j}$ is a linear operator on $\mathcal{H}_{BF}'$, i.e.,
\begin{equation}
    M_{i,j} = \sum_{s,s'=0}^{d-1} m_{i,j}^{s,s'} \ketbra{s}{s'}.
\end{equation}
Now we wish to trace out system $B$ from $\rho_{ABF}$. The subspace of $F$ for which $\rho_{F}$ has support is given by
\begin{equation}
    \mathcal{B}'_{F} = \big\{ \ket{f_{H},f_{V}} \ : \ \exists s \in \{0,...,d-1\}  \ \text{s.t.} \ \ket{b_{H}^{s},b_{V}^{s},f_{H},f_{V}} \in \mathcal{B}'_{BF}\big\}.
\end{equation}
We therefore see $\rho_{AF} \in \mathcal{D}(\mathcal{H}_{A} \otimes \mathcal{H}_{F}')$ where $\mathcal{H}_{F}' = \text{Span}[\mathcal{B}'_{F}]$. By noting $\rho_{ABF} \in \mathcal{D} \big( \mathcal{H}_{A} \otimes \mathcal{H}_{B} \otimes \mathcal{H}_{F}\big)$, $\rho_{AF}$ is given by
\begin{equation}
    \begin{aligned}
    \rho_{AF} &= \tr_{B}[\rho_{ABF}] \\&= \sum_{b_{H},b_{V}=0}^{\infty}(\id_{A} \otimes \bra{b_{H},b_{V}} \otimes \id_{F}) \rho_{ABF} (\id_{A} \otimes \ket{b_{H},b_{V}} \otimes \id_{F})\\
    &= \sum_{i,j = 0}^{3}\ketbra{i}{j}_{A} \otimes \sum_{s,s'=0}^{d-1}m_{i,j}^{s,s'}  \sum_{b_{H},b_{V}=0}^{\infty} (\bra{b_{H},b_{V}} \otimes \id_{F})\ketbra{s}{s'}(\ket{b_{H},b_{V}} \otimes \id_{F}).
    \end{aligned}
\end{equation}
Note that
\begin{equation}
    (\bra{b_{H},b_{V}} \otimes \id_{F})\ket{s} = \braket{b_{H}}{b_{H}^{s}}\cdot \braket{b_{V}}{b_{V}^{s}} \cdot \ket{f_{H}^{s},f_{V}^{s}} = \delta_{b_{H},b_{H}^{s}}\delta_{b_{V},b_{V}^{s}} \cdot \ket{f_{H}^{s},f_{V}^{s}}.
\end{equation}
We therefore see that inner summand is zero unless $b_{H} = b_{H}^{s} = b_{H}^{s'}$ and $b_{V} = b_{V}^{s} = b_{V}^{s'}$. We thus define
\begin{equation}
    F_{s,s'} = \sum_{b_{H},b_{V}=0}^{\infty} \delta_{b_{H},b_{H}^{s}} \cdot \delta_{b_{V},b_{V}^{s}} \cdot \delta_{b_{H},b_{H}^{s'}} \cdot \delta_{b_{V},b_{V}^{s'}} \cdot \ketbra{f_{H}^{s},f_{V}^{s}}{f_{H}^{s'},f_{V}^{s'}},
\end{equation}
allowing us to write
\begin{equation}
    \rho_{AF} = \sum_{i,j = 0}^{3}\ketbra{i}{j}_{A} \otimes \sum_{s,s'=0}^{d-1}m_{i,j}^{s,s'}F_{s,s'} . 
\end{equation}
Note that to compute $F_{s,s'}$ in practice, we only need to consider finitely many terms in the sum over $b_{H},b_{V}$. This is because the possible values of $b_{H}^{s}$ and $b_{V}^{s}$ are bounded from above due to the finite size of the subspace $\mathcal{H}_{BF}'$. 

\section{Bounds on the weight for the dimension reduction technique}\label{app:weight}
In this appendix, we show how an upper bound on the weight $W$ in \cref{eq:finalOptDR} can be derived from the frequency of double click events observed by Bob. Consider the active detection scenario with perfect-efficiency detectors. Recall that Bob's double-click POVM element in $H/V$ is 
\begin{equation}
\phv := \sum_{n_V=1}^\infty \sum_{n_H=1}^\infty \dyad{n_H,n_V},
\end{equation}
and we define $\pad$ for $D/A$ analogously. Note that $\phv$ is related to $N_{HV}$ defined in \cref{app:POVM} via $N_{HV} = \phv \otimes \id_{F}$. If $H/V$ is sampled a fraction $p_{z}$ of the time, the total double-click POVM is
\begin{equation}
\pdc := p_{z} \phv+ (1-p_{z}) \pad
\end{equation}
We denote the total $N$-photon subspace of $\mathcal{H}_{B}$, for $N=0,1,2,...$, by $\HN$. It is the support of the projector
\begin{align}
\pin& := \sum_{n=0}^N \dyad{n,N-n}_{H/V}\\
&=\sum_{n=0}^N \dyad{n,N-n}_{D/A},
\end{align}
and note that $P_{DC}$ is block-diagonal in the blocks formed by these projectors.

For states living entirely in the $N$-photon subspace ($N>0$), we now consider the smallest probability of observing a double-click. Formally, this amounts to solving
\begin{align}
\min \tr[\rho \pdc]\\
\rho \in \mathcal{D}\left({\HN}\right).
\end{align}
Some inspection reveals this is equivalent to finding the smallest eigenvalue $\lambda^N_{\text{min}}$ of the $N$-photon block of $\pdc$. Let's denote this block by $\pdcn=\pin \pdc \pin$.

\begin{lemma}
The smallest eigenvalue of $\pdcn$ is
\begin{equation}
\lambda^N_{\mathrm{min}}=\begin{cases}
\frac{1}{2} -\frac{1}{2} \sqrt{(2p_{z}-1)^2+8\cdot 2^{-N}(p_{z}-p_{z}^2)} & N \textrm{odd}\\
\frac{1}{2} -\frac{1}{2} \sqrt{(2p_{z}-1)^2+16\cdot 2^{-N}(p_{z}-p_{z}^2)} & N \textrm{even}
\end{cases}
\end{equation}
\label{lem:eig}
\end{lemma}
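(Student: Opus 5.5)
Since $\pdc$ is block diagonal, it suffices to work inside $\HN$, which has dimension $N+1$. Inside the $N$-photon block, a double click in $H/V$ fails to occur only when all $N$ photons land in a single mode. Hence
\begin{equation}
\pin\phv\pin=\pin-Q_1,\qquad \pin\pad\pin=\pin-Q_2,
\end{equation}
where $Q_1$ is the rank-two projector onto $\mathrm{Span}\{\ket{N,0}_{H/V},\ket{0,N}_{H/V}\}$ and $Q_2$ is the analogous projector in the $D/A$ basis (for $N\geq 2$; when $N=1$ both equal $\pin$). Therefore
\begin{equation}
\pdcn=\pin-\big(p_zQ_1+(1-p_z)Q_2\big),
\end{equation}
and $\lambda^N_{\min}=1-\lambda_{\max}\big(p_zQ_1+(1-p_z)Q_2\big)$. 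The plan is to compute this largest eigenvalue with the standard two-projector (Jordan's lemma) argument.

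First I will compute the overlap matrix $G_{ij}=\braket{e_i}{f_j}$ between the basis $\{e_i\}=\{\ket{N,0}_{H/V},\ket{0,N}_{H/V}\}$ and $\{f_j\}=\{\ket{N,0}_{D/A},\ket{0,N}_{D/A}\}$. Writing $\ket{N,0}_{H/V}=(\hat b_H^\dagger)^N\ket{0}/\sqrt{N!}$ with $\hat b_H=(\hat b_++\hat b_-)/\sqrt2$ and $\hat b_V=(\hat b_+-\hat b_-)/\sqrt2$, and expanding binomially, only the extreme terms contribute. This gives
\begin{equation}
G=2^{-N/2}\begin{bmatrix}1&1\\1&(-1)^N\end{bmatrix}.
\end{equation}
Second, I will use Jordan's lemma. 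The range of $Q_1$ and the range of $Q_2$ decompose into pairs of vectors $(u_k,v_k)$ with principal-angle cosines $c_k$, given by the singular values of $G$. On each pair, $p_z\ketbra{u_k}{u_k}+(1-p_z)\ketbra{v_k}{v_k}$ is a $2\times 2$ problem with largest eigenvalue
\begin{equation}
\tfrac12\Big(1+\sqrt{1-4p_z(1-p_z)(1-c_k^2)}\Big).
\end{equation}
The overall maximum is attained at the largest $c_k$, so
\begin{equation}
\lambda^N_{\min}=\tfrac12-\tfrac12\sqrt{(2p_z-1)^2+4p_z(1-p_z)\,c_{\max}^2}.
\end{equation}

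Third, I will read off $c_{\max}$ from $G$. For $N$ even, $G\propto\begin{bmatrix}1&1\\1&1\end{bmatrix}$ has top singular value $2\cdot 2^{-N/2}$, so $c_{\max}^2=4\cdot2^{-N}$ and the square-root term becomes $16\cdot2^{-N}(p_z-p_z^2)$. For $N$ odd, $G\propto\begin{bmatrix}1&1\\1&-1\end{bmatrix}$ has both singular values equal to $\sqrt2\cdot2^{-N/2}$, so $c_{\max}^2=2\cdot2^{-N}$, giving the factor $8\cdot2^{-N}$.

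The $N=1$ case needs a separate check, because there the two subspaces coincide with $\mathcal{H}_1$. The formula correctly gives $0$ since $(2p_z-1)^2+4p_z(1-p_z)=1$. The orthogonal complement of $\mathrm{ran}\,Q_1+\mathrm{ran}\,Q_2$ contributes eigenvalue $1$ to $\pdcn$, which is never the minimum. The main obstacle is getting the overlap phases right, in particular the sign $(-1)^N$ in $\braket{0,N_{H/V}}{0,N_{D/A}}$, since that sign alone produces the parity dependence.
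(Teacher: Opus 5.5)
Your argument is correct. It shares the paper's first two steps: the decomposition $\pdcn=\pin-\big(p_zQ_1+(1-p_z)Q_2\big)$, and the overlaps of $\ket{N,0},\ket{0,N}$ between the $H/V$ and $D/A$ bases, including the sign $(-1)^N$. The difference is in the final eigenvalue computation. The paper forms the $4\times 4$ Gram matrix of the weighted ensemble $\sqrt{p_z}\ket{N,0}_{H/V},\sqrt{p_z}\ket{0,N}_{H/V},\sqrt{1-p_z}\ket{N,0}_{D/A},\sqrt{1-p_z}\ket{0,N}_{D/A}$, which has the same nonzero spectrum as $p_zQ_1+(1-p_z)Q_2$, and then diagonalizes it symbolically. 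You instead apply Jordan's lemma to the two rank-two projectors, which reduces the problem to $2\times 2$ blocks labelled by the singular values of the cross-overlap matrix $G$. The two routes are equivalent, since the paper's Gram matrix is
\[
\begin{bmatrix} p_z\,\id & \sqrt{p_z(1-p_z)}\,G\\ \sqrt{p_z(1-p_z)}\,G^{\dagger} & (1-p_z)\,\id \end{bmatrix},
\]
and the singular value decomposition of $G$ splits it into exactly your $2\times 2$ blocks. What your route buys is a fully analytic derivation that explains the shape of the answer. The monotonicity of $\tfrac12\big(1+\sqrt{1-4p_z(1-p_z)(1-c^2)}\big)$ in $c$ shows that only the largest principal-angle cosine matters. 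The parity dependence is traced entirely to the singular values of $\begin{bmatrix}1&1\\1&(-1)^N\end{bmatrix}$, which is rank one for even $N$ and a scaled orthogonal matrix for odd $N$. Your route also covers the degenerate cases $N=1,2$, where $c_{\max}=1$ and $\lambda^N_{\min}=0$, without extra effort. The paper's route is more mechanical, but it needs nothing beyond the purification/Gram-matrix identity and a computer-algebra step.
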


\begin{proof}
Observe $\phv^{N}=\pin - \dyad{0,N}_{H/V} - \dyad{N,0}_{H/V}$ and analogously $\pad^{N}=\pin - \dyad{0,N}_{D/A} - \dyad{N,0}_{D/A}$. This allows us to write, 
\begin{equation}
\pdc^{N}=\pin - p_{z}\dyad{0,N}_{H/V} - p_{z}\dyad{N,0}_{H/V}-(1-p_{z}) \dyad{N,0}_{D/A} - (1-p_{z}) \dyad{0,N}_{D/A}.
\end{equation}
Thus, the smallest eigenvalue of $\pdc$ is one minus the largest eigenvalue of $p_{z}\dyad{0,N}_{H/V} + p_{z}\dyad{N,0}_{H/V}+(1-p_{z}) \dyad{N,0}_{D/A} +(1-p_{z}) \dyad{0,N}_{D/A}$. The eigenvalues of this rank-4 operator can be computed via its Gram matrix. Namely, we use the fact that a state $\sum_{i} \lambda_i \dyad{\psi_i}$ has (up to padding with zero eigenvalues) the same eigenvalues as $\sum_{ij} \sqrt{\lambda_i \lambda_j} \braket{\psi_i}{\psi_j}\dyad{i}{j}$, where $\{\ket{i}\}_{i}$ forms an orthonormal basis but $\{\ket{\psi_i}\}_{i}$ may not. This follows by taking a purification of the original mixed state followed by the partial trace. The Gram matrix of the ensemble $\sqrt{p_{z}} \ket{N,0}_{H/V}$, $\sqrt{p_{z}} \ket{0,N}_{H/V}$, $\sqrt{1-p_{z}} \ket{N,0}_{D/A}$, $\sqrt{1-p_{z}} \ket{0,N}_{D/A}$ is
\begin{equation}
G=\begin{bmatrix}
p_{z} & 0 & 2^{-N/2}\sqrt{p_{z}(1-p_{z})} & 2^{-N/2}\sqrt{p_{z}(1-p_{z})}\\
0 & p_{z} & 2^{-N/2}\sqrt{p_{z}(1-p_{z})} & (-1)^N 2^{-N/2}\sqrt{p_{z}(1-p_{z})}\\
2^{-N/2}\sqrt{p_{z}(1-p_{z})} & 2^{-N/2}\sqrt{p_{z}(1-p_{z})} & 1-p_{z} &0\\
2^{-N/2}\sqrt{p_{z}(1-p_{z})} & (-1)^N 2^{-N/2}\sqrt{p_{z}(1-p_{z})} &0 & 1-p_{z}
\end{bmatrix}.
\end{equation}
This follows from the transformation of creation operators by beam splitters which gives $\ket{N,0}_{D/A} = 2^{-N/2} (\ket{N,0}_{H/V}+\ket{0,N}_{H/V} )+... $ and $\ket{0,N}_{D/A} = 2^{-N/2} (\ket{N,0}_{H/V}+(-1)^N\ket{0,N}_{H/V} )+... $, where all unspecified terms are of the form $\ket{a,N-a}$ with $0<a<N$. Symbolically computing the eigenvalues of $G$ and taking the largest completes the proof.
\end{proof}
\noindent We remark that $\lambda_{\text{min}}^1,\lambda_{\text{min}}^2=0$, $\lambda_{\text{min}}^{2N}=\lambda_{\text{min}}^{2N-1}$, and $\lambda_{\text{min}}^{N}\leq\lambda_{\text{min}}^{N+1}$.

We now show how \cref{lem:eig} translates to a lower bound on the weight inside a desired finite-dimensional subspace. Let the subspace of interest be that of up to $N$ photons, i.e., the support of the projector $\Pi^N= \sum_{k=0}^N \pik$.

\begin{corollary}
Let $\rho \in \mathcal{D}(\mathcal{H}_{B})$. If $\rho$ satisfies $\tr[\rho \pdc]=q$, then $\tr[\rho \Pi^N] \geq 1-\frac{q}{\lambda_{\mathrm{min}}^{N+1}}$. Note $N$ must be larger than 1 to obtain a nontrivial bound. \label{cor:weight}
\end{corollary}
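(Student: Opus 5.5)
The plan is to decompose $\rho$ along the photon-number blocks and use the block-diagonal structure of $\pdc$ together with \cref{lem:eig} and the monotonicity remark $\lambda^{N}_{\mathrm{min}}\leq\lambda^{N+1}_{\mathrm{min}}$. Since $\pdc$ is block diagonal with respect to the projectors $\pik$, only the diagonal blocks of $\rho$ contribute:
\begin{equation*}
q=\tr[\rho\,\pdc]=\sum_{k=0}^{\infty}\tr[\pik\rho\pik\,\pdc^{k}],
\end{equation*}
where $\pdc^{k}=\pik\pdc\pik$. Each block $\pik\rho\pik$ is positive with trace $w_k:=\tr[\rho\pik]$, so $\tr[\pik\rho\pik\,\pdc^{k}]\geq \lambda^{k}_{\mathrm{min}}w_k$ by the variational characterisation of the smallest eigenvalue on $\mathcal{H}_k$.

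Next I would discard the blocks $k\leq N$, since each term is nonnegative ($\pdc\succeq 0$). For $k\geq N+1$ I would use the monotonicity $\lambda^{k}_{\mathrm{min}}\geq\lambda^{N+1}_{\mathrm{min}}$. This gives
\begin{equation*}
q\geq \lambda^{N+1}_{\mathrm{min}}\sum_{k\geq N+1}w_k=\lambda^{N+1}_{\mathrm{min}}\big(1-\tr[\rho\Pi^N]\big),
\end{equation*}
where the last equality uses $\sum_k w_k=\tr\rho=1$, since the $\pik$ resolve the identity on $\mathcal{H}_B$. Rearranging when $\lambda^{N+1}_{\mathrm{min}}>0$ yields $\tr[\rho\Pi^N]\geq 1-q/\lambda^{N+1}_{\mathrm{min}}$.

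The nontriviality comment then follows from the remark that $\lambda^{1}_{\mathrm{min}}=\lambda^{2}_{\mathrm{min}}=0$. For $N\leq 1$ we have $\lambda^{N+1}_{\mathrm{min}}=0$ and no bound is obtained. For $N\geq 2$, $\lambda^{N+1}_{\mathrm{min}}>0$ whenever $p_z\in(0,1)$, because the square root in \cref{lem:eig} is then strictly less than $1$.

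The main obstacle is justifying the monotonicity $\lambda^{k}_{\mathrm{min}}\geq\lambda^{N+1}_{\mathrm{min}}$ for all $k\geq N+1$. Rather than rely on it as stated, I would check it directly from \cref{lem:eig}. The radicand is $(2p_z-1)^2+c\,2^{-k}p_z(1-p_z)$ with $c=8$ for odd $k$ and $c=16$ for even $k$. The quantity $c\,2^{-k}$ equals $2^{4-2m}$ for both $k=2m-1$ and $k=2m$, so it is constant on each pair $\{2m-1,2m\}$ and nonincreasing in $k$. Hence the radicand is nonincreasing in $k$ and $\lambda^{k}_{\mathrm{min}}$ is nondecreasing. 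A minor point is that the infinite sum exchange requires $\rho$ to be trace class, which holds since it is a density operator.
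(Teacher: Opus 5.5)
Your proof is correct and follows the paper's argument essentially step for step. You use the block-diagonal structure of $\pdc$ with respect to the projectors $\pik$, drop the blocks with $k\le N$, bound each remaining block below by $\lambda^{k}_{\mathrm{min}}\ge\lambda^{N+1}_{\mathrm{min}}$, and rearrange using $\tr[\rho]=1$; your explicit derivation of that monotonicity from the closed form in \cref{lem:eig}, and of $\lambda^{N+1}_{\mathrm{min}}>0$ for $N\ge 2$ and $p_z\in(0,1)$, usefully justifies what the paper only asserts in a remark.
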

\begin{proof}
By direct calculation,
\begin{align}
q&=\tr[\rho \pdc]\\
&= \tr[\rho \sum_{k=0}^\infty \pik \pdc]\\
&=\tr[\rho \sum_{k=0}^\infty \pik \pdc \pik]\\
&\geq \tr[\rho \sum_{k=N+1}^\infty \pik \pdc \pik]\\
&\geq \tr[\rho \sum_{k=N+1}^\infty \pik \lambda_{\text{min}}^k]\\
&\geq \tr[\rho \sum_{k=N+1}^\infty \pik \lambda_{\text{min}}^{N+1}]\\
&= \lambda_{\text{min}}^{N+1} \tr[\rho (\id - \Pi^N) ],
\end{align}
where the second equality follows by inserting identity, the third because $\pdc$ is block-diagonal, the fourth because we drop nonnegative terms, the fifth because $\lambda_{\text{min}}^k$ are the smallest eigenvalues of the blocks, and the final equality follows from the fact that the sequence $\lambda_{\text{min}}^{k}$ is increasing with $k$. Rearranging, we obtain
\begin{align}
\tr[\rho \Pi^N] \geq 1-\frac{q}{\lambda_{\text{min}}^{N+1}} 
\end{align}
as claimed.
\end{proof}

To conclude this appendix, we now show how \cref{cor:weight} can be used to obtain a bound on the weight $W$ in the bypass scenario, as discussed in \cref{sec:dr}, when $\eta_{T} = 1$. Here, both $B$ and $F$ have infinite-dimensional Hilbert spaces, and for a projector of the form $\Pi_B^{N} \otimes \Pi_F^{N}$, we can bound the weight as follows:
\begin{align}
\tr[\rho (\Pi_B^{N} \otimes \Pi_F^{N})] &= \tr[\rho (\Pi_B^{N} \otimes \id_F^{N})] - \tr[\rho (\Pi_B^{N} \otimes (\id_F-\Pi_F^{N}))]\\ 
&\geq \tr[\rho (\Pi_B^N \otimes \id_F)] - \tr[\rho (\id_B \otimes (\id_F-\Pi_F^N))]\\ 
&= 1-W_B - W_F
\end{align}
where $W_{B}$ and $W_{F}$ are the weights of $\rho$ outside the support of $\Pi_{B}^N$ and $\Pi_{F}^N$, respectively. Because $\eta_{T} = 1$, we can attribute all double click events to signals from mode $B$, and therefore $W_B$ can be bounded using the frequency of double-click events observed by Bob according to \cref{cor:weight}. Also note that $W_F$ can be calculated directly because the bypass channel is fully characterized via the source replacement constraints, i.e., we know the reduced state $\rho_{F}$. By linearity, this implies that for a sum of projectors $\sum_{k} \Pi_{B}^{k} \otimes \Pi_{F}^{k}$, where in each term the projectors on $B$ and $F$ can be different, we have $\tr[\rho (\sum_k \Pi^k_B \otimes \Pi^k_F)]\geq \sum_k (1-W_B^k - W_F^k)$.


%

\end{document}